\newcommand{\bx}{\bm{x}}
\newcommand{\bc}{\bm{c}}
\newcommand{\bz}{\ensuremath{{\bm z}}}
\newcommand{\br}{\ensuremath{{\bm r}}}
\newcommand{\by}{\ensuremath{{\bm y}}}
\newcommand{\brho}{\ensuremath{{\bm \rho}}}
\newcommand{\set}[1]{\ensuremath{\mathcal #1}}
\newcommand{\separator}{
  \begin{center}
    \rule{\columnwidth}{0.3mm}
  \end{center}
}
\def\eg{{\it e.g.}}
\def\ie{{\it i.e.}}
\newtheorem{theorem}{Theorem}[section]
\newtheorem{corollary}{Corollary}[section]
\newtheorem{lemma}{Lemma}[section]
\newtheorem{definition}{Definition}[section]
\newcommand{\beq}{\begin{eqnarray*}}
\newcommand{\eeq}{\end{eqnarray*}}
\newcommand{\beqn}{\begin{eqnarray}}
\newcommand{\eeqn}{\end{eqnarray}}
\newcommand{\bemn}{\begin{multiline}}
\newcommand{\eemn}{\end{multiline}}
\newcommand{\sqeq}{\addtolength{\thinmuskip}{-4mu}
\addtolength{\medmuskip}{-4mu}\addtolength{\thickmuskip}{-4mu}}
\newcommand{\unsqeq}{\addtolength{\thinmuskip}{+4mu}
\addtolength{\medmuskip}{+4mu}\addtolength{\thickmuskip}{+4mu}}
\def\cI{{\cal I}}
\def\R{\mathbb{R}}
\newcommand{\bgamma}{\boldsymbol{\gamma}}
\newcommand{\blambda}{\boldsymbol{\lambda}}
\newcommand{\Q}{\boldsymbol{Q}}
\newcommand{\ratech}{\gamma}
\newcommand{\rateon}{\gamma^{(0\to 1)}}
\newcommand{\rateoff}{\gamma^{(1\to 0)}}
\newcommand{\bsigma}{\boldsymbol{\sigma}}
\newcommand{\bone}{\mathbf{1}}
\newcommand{\bD}{\boldsymbol{D}}
\newcommand{\cH}{\boldsymbol{C}}
\newcommand{\boR}{\boldsymbol{R}}
\newcommand{\bor}{\boldsymbol{r}}
\newcommand{\boS}{\boldsymbol{S}}
\newcommand{\bLambda}{\boldsymbol{\Lambda}}
\newcommand{\sch}{\mathcal{H}}
\newcommand{\chs}{h}
\newcommand{\bof}{\boldsymbol{f}}
\newcommand{\bog}{\boldsymbol{g}}
\newcommand{\boc}{\boldsymbol{c}}
\begin{document}

%
 \title{CSMA over Time-varying Channels: \\
   Optimality, Uniqueness and \\ Limited Backoff Rate}


\author{Se-Young Yun, Jinwoo Shin, and Yung Yi \thanks{This work was
     supported by the Center for Integrated Smart Sensors funded by
     the Ministry of Education, Science and Technology as Global
     Frontier Project (CISS-2012M3A6A6054195).}\thanks{S. Yun is with
     School of Electrical Engineering, KTH, Sweden (e-mail:
     seyoung@kth.se) and this work is done while he was with Department of Electrical
     Engineering, KAIST, Korea. J. Shin and Y. Yi are with Department of Electrical
     Engineering, KAIST, Korea (e-mail: mijirim@gmail.com and yiyung@kaist.edu).  }}

\maketitle

\thispagestyle{empty}

\begin{abstract}
  Recent studies on MAC scheduling have shown that carrier sense
  multiple access (CSMA) algorithms can be throughput optimal for
  arbitrary wireless network topology.  However, these results are
  highly sensitive to the underlying assumption on `static' or `fixed'
  system conditions. For example, if channel conditions are
  time-varying, it is unclear how each node can adjust its CSMA
  parameters, so-called backoff and channel holding times, using its
  local channel information for the desired high performance. In this
  paper, we study `channel-aware' CSMA (A-CSMA) algorithms in
  time-varying channels, where they adjust their parameters as some
  function of the current channel capacity. First, we show that the
  achievable rate region of A-CSMA equals to the maximum rate region
  if and only if the function is exponential. Furthermore, given an
  exponential function in A-CSMA, we design updating rules for their
  parameters, which achieve throughput optimality for an
  arbitrary wireless network topology.  They are the first CSMA
  algorithms in the literature which are proved to be throughput
  optimal under time-varying channels.  Moreover, we
  also consider the case when back-off rates of A-CSMA are highly
  restricted compared to the speed of channel variations, and
  characterize the throughput performance of A-CSMA in terms of the
  underlying wireless network topology. Our results not only guide a
  high-performance design on MAC scheduling under highly time-varying
  scenarios, but also provide new insights on the performance of CSMA
  algorithms in relation to their backoff rates and the network
  topology.
\end{abstract}



%

\section{Introduction}
\subsection{Motivation}
How to access the shared medium is a crucial issue in achieving high
performance in many applications, \eg, wireless networks.  In spite of a
surge of research papers in this area, it's the year 1992 that the
seminal work by Tassiulas and Ephremides proposed a throughput optimal
medium access algorithm, referred to as Max-Weight \cite{TE92}. Since
then, a huge array of subsequent research has been made to develop
distributed medium access algorithms with high performance guarantees and
low complexity. However, in many cases the tradeoff between complexity
and efficiency has been observed, or even throughput optimal algorithms
with polynomial complexity have turned out to require heavy message
passing, which becomes a major hurdle to becoming practical medium
access schemes, \eg, see \cite{JWbook10,YC11-survey} for surveys.

Recently, there has been exciting progresses that even fully
distributed medium access algorithms based on CSMA (Carrier Sense
Multiple Access) with no or very little message passing can achieve
optimality in both throughput and utility, \eg, see
\cite{JW10DC,IEEEpaper:Liu_Yi_Proutiere_Chiang_Poor_2009,SRS09,NTS10QCSMA}. 
The main intuition underlying these results is that nodes dynamically
adjust their CSMA parameters, {\em backoff} and {\em channel holding}
times, using local information such as queue-length so that they solve
a certain network-wide optimization problem for the desired high
performance. There is a survey paper which dealing with recent results
on the CSMA algorithms~\cite{YY12OC}.



However, the recent CSMA algorithms crucially rely on 
the assumption of static channel conditions. 
It is far from being clear how they perform for time-varying channels,
which frequently occurs in practice. 
Note that it has already been shown that the Max-Weight is throughput optimal
for time-varying channels \cite{Ta97SP} and joint scheduling and
congestion control algorithms based on the optimization decomposition,
\eg, \cite{GNT06RA}, are utility optimal by selecting the schedules
over time, both of which essentially track the channel conditions
quickly. However, a similar channel adaptation for CSMA algorithms may not
be feasible for the following two reasons. First, each node in a
network only knows its local channel information, and cannot track
channel conditions of other
nodes. 
Second, there exists a non-trivial coupling between CSMA's performance
under time-varying channels and the speed of channel variations. A CSMA
schedule at some instant may not have enough time to be close to the
desired `stationary' distribution before the channel changes.
In this paper, we formalize and quantify this coupling, and
  study when and how CSMA algorithms perform depending on the network
  topologies and the speed of channel variations.

\subsection{Our Contribution}

In this paper, we model time-varying channels by a Markov process, and
study `channel-aware' CSMA (A-CSMA) algorithms where each link adjusts
its CSMA parameters, backoff and channel holding times, as some
function of its (local) channel capacity. In what follows, we first
summarize our main contributions and then describe more details.

\smallskip
\noindent {\bf \em C1 -- Achievable rate region of A-CSMA}.  We show
that the achievable rate region of A-CSMA is maximized if and only if the
function is exponential.  In particular, we prove that A-CSMA can
achieve an arbitrary large fraction of the capacity region for
exponential functions (see Theorem~\ref{thm3}), which turns out to be
{\em impossible} for non-exponential functions (see
Theorem~\ref{thm:nonlinear}).  

\smallskip
\noindent {\bf \em C2 -- Dynamic throughput optimal A-CSMA}.  
We develop two types of throughput optimal A-CSMA algorithms, where
links dynamically update their CSMA
parameters based on both (a) the exponential function of the channel
capacity in {\bf \em C1} and (b) the empirical local load or the local queue length,
without knowledge of the speed of channel variation and the arrival
statistics (such as its mean) in advance (see Theorem
\ref{thm:timevaryingcsma} and \ref{thm:queuecsma}).

\smallskip
\noindent {\bf \em C3 -- Achievable rate region of A-CSMA with
  limited backoff rates.} 
We provide a lower bound for the achievable rate region of A-CSMA when 
their backoff rates are highly limited compared to the speed of channel variations (see Theorem \ref{thm:achi-rate-regi}).
Our bound depends on a combinatorial property of the underlying
interference graph (\ie, its chromatic number), and is independent of
backoff rates or the speed of channel variations. Moreover, it is
noteworthy that the achievable rate region of A-CSMA includes the
achievable rate region of channel-unaware CSMA (U-CSMA) for any
limited backoff rate (see Corollary~\ref{cor:UACSMA}).


\smallskip 
A typical necessary step to analyze and design a CSMA
algorithm of high performance in static channels is to characterize
the stationary distribution of the Markov chain induced by it
\cite{JW10DC,IEEEpaper:Liu_Yi_Proutiere_Chiang_Poor_2009,SRS09,NTS10QCSMA}.
However, this task is much harder for A-CSMA in time-varying channels,
since the Markov chain induced by A-CSMA is {\em non-reversible} (see Theorem~\ref{thm:nonrev}), \ie,
it is unlikely that its stationary distribution has a `clean' formula
to analyze, being in sharp contrast to the CSMA analysis for static
channels. To overcome this technical issue, we first show that the
stationary distribution approximates to a 
of product-form distribution when backoff rates are large
enough. Then, for {\bf \em C1}, we study the product-form to guarantee
high throughput of A-CSMA, where the exponential functions are found.
The main novelty lies in establishing the approximation scheme, using
the {\em Markov chain tree theorem} \cite{AT89PMCT}, which requires
counting the weights of arborescences induced by the non-reversible
Markov process to understand its stationary distribution.

For {\bf \em C2}, we combine {\bf \em C1}
with existing techniques: our first and second throughput optimal algorithms 
are `rate-based' and `queue-based' ones originally studied
in static channels by Jiang et al.\ (cf.\ \cite{JW10DC,JSSW10DRA}) and
Rajagopalan et al.\ (cf.\ \cite{SRS09,Shin12}), respectively. To extend
these results to time-varying channels, our specific choice of holding
times as exponential functions of the channel
capacity plays a key role in establishing the desired throughput optimal
performance. To our best knowledge, they are 
the first CSMA algorithms in the literature which are proved to be throughput optimal 
under general Markovian time-varying channel models.

{\bf \em C3} is motivated by observing
that a CSMA algorithm in fast time-varying channels
inevitably has to be of high backoff rates for the desired throughput performance, \ie,
high backoff rates are needed for tracking time-varying channel conditions fast enough.
However, backoff rates are bounded in practice, which may cause degradation in the CSMA's
performance. We note that CSMA algorithms with limited backoff or holding rates 
have been little analyzed in the literature, despite of their practical importance.\footnote{Even in static channels,
restrictions on backoff or holding rates may degrade the throughput or delay performances of CSMA algorithms.}
{\bf \em C3} provides a lower bound for A-CSMA throughputs regardless of restrictions on
their backoff rates or sensing frequencies. For example, if the interference graph
is bipartite (\ie., its chromatic number is two), our bound implies that A-CSMA is guaranteed to have 
at least 50\%-throughput even with arbitrary small backoff rates. 
Furthermore, one can design a dynamic high-throughput A-CSMA algorithm with limited backoff
rates using {\bf \em C3} (similarly as {\bf \em C1} is used for {\bf \em C2}), 
but in the current paper we do not present further details due to
space limitation.


\subsection{Related Work}

The research on throughput optimal CSMA has been initiated independently by 
Jiang et al.\ (cf.\ \cite{JW10DC,JSSW10DRA}) and Rajagopalan et al.\
(cf.\ \cite{SRS09,Shin12}), where both consider the continuous time
and collision free setting. Under exponential
distributions on backoff and holding times, the system is modeled by a
continuous time Markov chain, where the backoff rate or channel holding
time at each link is adaptively controlled to the local (virtual or actual) queue
lengths. 
Jiang et al.\ proved that the long-term link throughputs are the
solution of an utility maximization problem assuming the infinite backlogged
data. Rajagopalan et al.\ showed that
if the CSMA parameters are changing very slowly with respect to the queue length changes, 
 the mixing time is much faster than the queue length changes so that
the realized link schedules can provably emulate Max-Weight very well. 
Although their key intuitions are apparently different, analytic techniques
are quite similar, \ie, both require to understand the long-term behavior (\ie\
stationarity) of the Markov chains formed by CSMA.

These throughput optimality results motivate further research on design and analysis of CSMA algorithms. 
The work by Liu et al.\
\cite{IEEEpaper:Liu_Yi_Proutiere_Chiang_Poor_2009} follows the approach in \cite{JW10DC}
and proves the utility optimality using a stochastic approximation
technique, which has been extended to the multi-channel, multi-radio
case with a simpler proof in \cite{PY10RA}. The throughput optimality of MIMO networks under SINR model is also shown in \cite{QZ10CDM}.
As opposed to the continuous-time setting that carrier sensing is perfect and
instantaneous (and hence no collision occurs), more practical discrete
time settings
that carrier sensing is imperfect or delayed (and hence collisions occur) 
have been also studied. 
The throughput optimality of CSMA algorithms in discrete time settings with collisions is
established in \cite{JW11}, \cite{shah2011medium} and \cite{KNSV11ICS}, where
the authors in \cite{KNSV11ICS} consider imperfect sensing information.
In \cite{IEEEpaper:Liu_Yi_Proutiere_Chiang_Poor_2009}, the authors
studied the impact of collisions and the tradeoff between short-term
fairness and efficiency. The authors in \cite{NTS10QCSMA} considered a
synchronous system consisting of the control phase, which eliminates the
chances of data collisions via a simple message passing, and the data
phase, which actually enables data transmission based on the
discrete-time Glauber dynamics. 
There also exist several efforts on improving or analyzing delay
performance \cite{JLNSW11FM,Sinclair07,LE12FC,Marbach11TO,Shah10DO,CD12GM}.

To the best of our knowledge, CSMA under time-varying channels has been
studied only in \cite{LE12FC} for only complete interference graphs,
when the arbitrary backoff rate is allowed, and more seriously, under
the time-scale separation assumption, which does not often hold in
practice and extremely simplifies the analysis (no mixing time related
details are needed).



\section{Model and Preliminaries}
\label{sec:model}

\subsection{Network Model}

We consider a network consisting of a collection of $n$ queues (or
links) $\{1,\dots,n\}$ and time is indexed by $t\in \mathbb{R}_+$. Let
$Q_i(t) \in \mathbb{R}_+$ denote the amount of work in queue $i$ at time
$t$ and let $\Q(t) = [Q_i(t)]_{1\leq i\leq n}$.  The system starts
empty, \ie, $Q_i(0)=0$. We assume work arrives at each queue $i$ as per an
exogenous ergodic stationary process with rate
$\lambda_i>0$, \ie,
$$E\left[A_i(t,t+1)~|~A_i(0,t)\right]= \lim_{t\rightarrow \infty}
\frac{A_i (0,t)}{t}= \lambda_i,~\mbox{for all}~t\in \mathbb Z_+,$$
where $A_i(s,t)<\infty$ denotes the cumulative arrival to queue $i$ in the time interval $(s,t]$.
Each queue $i$ can be serviced at rate $c_i(t)\geq 0$ representing the
potential departure rate of work from the queue $Q_i(t).$ We consider
finite state Markov time-varying channels \cite{WM95FMC}: each
$\{\boc(t) = [c_i(t)]:t\geq 0\}$ is a continuous-time, time-homogeneous, irreducible Markov
process, where each link has $m$ states channel space such that
$c_i(t) \in \sch:=\{\chs_1,\dots,\chs_m \}$ and $0 < \chs_1 < \dots < \chs_m
=1.$ We denoted by $\ratech^{ \boldsymbol{u} \to \boldsymbol{v}}$ the
`transition-rates' on the channel state for $\boldsymbol{u} \to
\boldsymbol{v},$ $\boldsymbol{u} ,\boldsymbol{v} \in \sch^n.$ For the
time-varying channels, we assume that each link $i$ knows the channel
state $c_i(t)$ before it transmits.\footnote{The channel information can
  be achieved using control messages such as RTS and CTS in IEEE 802.11,
  and links can adapt their transmission parameters to channel
  transitions for every transmission by changing coding and modulation
  parameters.}  We call
$\max_{\boldsymbol{u} \in \sch^n} \{ \sum_{\boldsymbol{v} \in
  \sch^n:\boldsymbol{v} \neq \boldsymbol{u}} \ratech^{\boldsymbol{u} \to
  \boldsymbol{v}}\}$ the {\em channel varying speed.}  The inverse of
{\em channel varying speed} indicates the maximum of the expected number
of channel transitions during the unit-length time interval.
We consider only single-hop sessions (or flows), \ie, once work departs
from a queue, it leaves the network.

The queues are offered service as per the constraint imposed by
interference. To model this constraint, we adopt a popular graph-based
approach, where denote by $G = (V, E)$ the
inference graph among $n$ queues, 
where the vertices $V = \{1,\ldots,n\}$ represent
queues and the edges $E \subset V \times V$ represent
interferences between queues: $(i,j) \in E,$ if queues $i$ and $j$ interfere with
each other. Let $\mathcal N(i) = \{ j \in V : (i,j) \in E\}$ and
$\bsigma(t) = [\sigma_i(t)] \in \{0,1\}^n$ denote the neighbors of node
$i$ and a schedule at time $t$, \ie, whether queues transmit at time
$t,$ respectively, where $\sigma_i(t) = 1$ represents transmission of
queue $i$ at time $t$. Then, interference imposes the constraint that
for all $t \in \R_+$,
$\bsigma(t)  \in  \cI(G)$, where
 $$  \cI(G) := \big\{ \brho = [\rho_i] \in \{0,1\}^n:
    ~\rho_i + \rho_j \leq 1,  ~\forall (i,j) \in E \big\}. $$
The resulting queueing dynamics are described as follows. For 
$0 \leq s < t$ and $1\leq i\leq n$,
$$ Q_i(t) = Q_i(s) - \int_{s}^t \sigma_i(r) c_i(r) \bone_{\{Q_i(r) > 0\}} ~dr + A_i(s,t), $$
where 
$\bone_{\{\cdot\}}$ denotes the indicator function.
Finally, we define the cumulative actual and potential departure processes $\bD(t) = [D_i(t)]$ and
$\widehat \bD(t) = [\widehat D_i(t)]$, respectively, where
\begin{equation*}
D_i(t) = \int_{0}^t \sigma_i(r) c_i(r)\bone_{\{Q_i(r) > 0\}} dr, \ \widehat D_i(t) = \int_{0}^t \sigma_i(r) c_i(r) dr.
\end{equation*}

\subsection{Scheduling, Rate Region and Metric}
The main interest of this paper is to design a scheduling algorithm
which decides $\bsigma(t)\in \cI(G)$ for each time instance
$t\in\mathbb{R}_+$. Intuitively, it is expected that a good scheduling algorithm
will keep the queues as small as possible. To formally discuss, we
define the maximum achievable rate region (also called capacity region) $\cH \subset [0,1]^n$ of the network,
which
is the convex hull of the feasible scheduling set $\cI(G)$, \ie,
\begin{multline*}
\cH = \cH(\bgamma, G)=\Big\{ \sum_{\bc\in\sch^n} \pi_{\bc} \sum_{\brho \in \cI(G)}\alpha_{\brho,\bc} \bc^{T}\cdot \brho ~:
  \alpha_{\brho,\bc} \ge 0 
\text{ and } \cr \sum_{\brho \in
    \cI(G)}{\alpha_{\brho,\bc}} = 1\text{ for all }\bc\in\sch^n
\Big\}, ~~ 
\end{multline*}
where $\bc^{T}\cdot \brho=[c_i\rho_i]$ and $\pi_{\bc}$ denotes the stationary distribution of
channel state $\bc$ under the channel-varying Markov process.
The intuition behind this definition comes from the facts: (a) any
scheduling algorithm has to choose a schedule from $\cI(G)$ at each
time and channel state where $\alpha_{\brho,\bc}$ denotes the fraction
of time selecting schedule $\brho$ for given channel state $\bc$ and (b)
for channel state $\bc\in\sch^n$, the fraction in the time domain
where $\bc(t)=[c_i(t)]$ is equal to $\bc$ is
$\pi_{\bc}.$ 
Hence the time average of the `service rate' induced by any
algorithm must belong to $\cH$. 

We call the arrival rate $\blambda$ {\em admissible} if $\blambda =[\lambda_i]\in\bLambda=\bLambda(\bgamma, G)$, where 
$$ \bLambda(\bgamma, G):= \left\{ \blambda \in \mathbb{R}^n_{+} :
  \blambda \leq \blambda^{\prime}\text{, for some } \blambda^{\prime}\in
  \cH(\bgamma, G) \right\},$$
where $\blambda \leq \blambda^{\prime}$
corresponds to the component-wise inequality, \ie, if
$\blambda\notin\bLambda$, queues should grow linearly over time under
any scheduling algorithm.  Further, 
$\blambda$ is called {\em strictly
  admissible} if $\blambda \in\bLambda^o=\bLambda^o(\bgamma, G)$ and
$$ \bLambda^o(\bgamma, G) := \left\{ \blambda \in \mathbb{R}^n_{+} :
\blambda < \blambda^{\prime} \text{, for some } \blambda^{\prime}\in \cH(\bgamma, G) \right\}. $$
We now define the performance metric. 
\begin{definition}\label{def:sched-rate-regi}
A scheduling algorithm is called rate-stable for a given arrival rate $\blambda$, if
  \begin{equation}\label{eq1}\lim_{t \rightarrow \infty} \frac{1}{t}\bD(t)=\blambda\qquad(\mbox{with probability }1).\end{equation}
Furthermore, we say a scheduling algorithm has $\alpha$-throughput if it is rate-stable for 
	any $\blambda\in\alpha\bLambda^o(\bgamma,G)$. In particular, when $\alpha=1$, it is called throughput optimal.	
\end{definition}

We note that \eqref{eq1} is equivalent to 
$ \lim_{t\rightarrow \infty} \frac{1}{t}\Q(t) = 0,$
since $\lim_{t\to\infty} \frac{A_i(0,t)}{t}  =  \lambda_i$ (because
the arrival process is stationary ergodic).
The following lemma implies that the potential departure process suffies to study
the rate-stability.
\begin{lemma}\label{lem2}
  A scheduling algorithm is rate-stable if 
  $$\lim_{t \rightarrow \infty} \frac{1}{t}\widehat{\bD}(t)>\blambda.$$
\end{lemma}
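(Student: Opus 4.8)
The plan is to show that if the potential departure rate strictly dominates the arrival rate, then the actual departure rate equals the arrival rate, i.e. the queue is rate-stable. The key point is that the only way the actual departure process $D_i$ can fall short of the potential process $\widehat D_i$ is through the indicator $\bone_{\{Q_i(r)>0\}}$ being zero, i.e. the queue being empty; but an empty queue is precisely the situation where the actual service keeps up with arrivals. So I would argue by a dichotomy on whether $Q_i(t)$ stays bounded or grows.

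\emph{Step 1 (the easy direction: actual rate cannot exceed $\lambda_i$).} Since $D_i(t) \le \widehat D_i(t)$ only gives an upper bound in terms of potential service, I instead use the flow balance $Q_i(t) = A_i(0,t) - D_i(t) \ge 0$, which gives $\limsup_{t\to\infty} \frac1t D_i(t) \le \lim_{t\to\infty}\frac1t A_i(0,t) = \lambda_i$ directly from $Q_i(t)\ge 0$ and the ergodicity of the arrival process.

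\emph{Step 2 (the main direction: actual rate is at least $\lambda_i$).} Here I would fix $i$ and consider two cases for a fixed sample path (in the probability-one event where all the relevant limits hold). \textbf{Case (a): $Q_i(t)$ is bounded.} Then $\frac1t D_i(t) = \frac1t A_i(0,t) - \frac1t Q_i(t) \to \lambda_i$, and we are done. \textbf{Case (b): $Q_i(t)$ is unbounded.} I claim this case is impossible under the hypothesis. Pick the last time $s_t \le t$ before $t$ at which the queue was empty (or $0$ if it was never empty); on $(s_t, t]$ we have $\bone_{\{Q_i(r)>0\}} = 1$, so $D_i(t) - D_i(s_t) = \widehat D_i(t) - \widehat D_i(s_t)$. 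Combining this with the flow balance on $(s_t,t]$, namely $Q_i(t) - Q_i(s_t) = A_i(s_t,t) - (D_i(t)-D_i(s_t))$ and $Q_i(s_t)=0$ (if $Q_i$ hit $0$; otherwise use $Q_i(s_t)=Q_i(0)=0$), gives
\begin{equation*}
Q_i(t) = A_i(s_t,t) - \big(\widehat D_i(t) - \widehat D_i(s_t)\big).
\end{equation*}
Now if $Q_i(t)\to\infty$ along a subsequence, then $s_t\to\infty$ as well (since arrivals in any window of bounded length are finite a.s.), and dividing by $t$ and using the a.s.\ limits $\frac1t A_i(0,t)\to\lambda_i$ and $\frac1t\widehat D_i(t)\to \mu_i$ for some $\mu_i > \lambda_i$ (the hypothesis), together with $s_t/t\to$ some limit, one derives that the right-hand side has nonpositive $\limsup$, contradicting $Q_i(t)\to\infty$. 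Hence Case (b) cannot occur, so Case (a) holds and $\frac1t D_i(t)\to\lambda_i$.

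\emph{Main obstacle.} The delicate point is Step 2, Case (b): making the subsequence argument with the ``last empty time'' $s_t$ rigorous, in particular controlling the ratio $s_t/t$ and handling the possibility that $s_t$ stays bounded (the queue never empties after some point) versus $s_t\to\infty$. In the first sub-case the queue is continuously busy on a half-line, so $D_i$ and $\widehat D_i$ differ by a constant and $\frac1t D_i(t)\to\mu_i>\lambda_i$, which again contradicts $Q_i(t) = A_i(0,t) - D_i(t) \ge 0$ via Step 1's bound; in the second sub-case one needs the Cesàro-type limit argument above. Once the dichotomy is set up carefully, each piece is a short computation using only $Q_i\ge 0$, the defining integral identities for $D_i,\widehat D_i$, and the stationary-ergodic arrival assumption; no properties of the scheduling algorithm itself are needed, which is exactly why the lemma reduces rate-stability to a statement about $\widehat\bD$.
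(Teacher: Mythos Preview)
Your approach via the last-empty-time $s_t$ is correct in spirit and standard, but the contradiction in Case~(b) is mis-set. What your computation actually delivers (after dividing by $t$ and passing to limits along a subsequence $t_k$ with $s_{t_k}\to\infty$) is $\limsup_k Q_i(t_k)/t_k\le (1-\beta)(\lambda_i-\mu_i)\le 0$ for some subsequential limit $\beta$ of $s_{t_k}/t_k$. This does \emph{not} contradict $Q_i(t_k)\to\infty$: think of $Q_i(t)=\sqrt t$. So you cannot conclude that ``$Q_i$ unbounded'' is impossible. The repair is to drop the bounded/unbounded dichotomy and aim directly at $\limsup_{t\to\infty}Q_i(t)/t=0$: assume $\limsup Q_i(t)/t=\delta>0$, choose $t_k$ realizing this $\delta$, and run your last-empty-time identity along $t_k$. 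Your sub-case split on whether $s_{t_k}$ stays bounded or $s_{t_k}\to\infty$ then yields $\delta\le 0$ in either branch, which is the contradiction you want. (Incidentally, the parenthetical ``$s_t\to\infty$ since arrivals in any window of bounded length are finite'' only gives $t_k-s_{t_k}\to\infty$, not $s_{t_k}\to\infty$; the sub-case split is genuinely needed.)

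For comparison, the paper omits this proof in the published text, but its draft handles the key case $\mu_i\ge\lambda_i$ by a coupling rather than a sample-path argument: it first shows that when $\mu_i<\lambda_i$ the queue empties only finitely often and $Q_i(t)/t\to\lambda_i-\mu_i$; then, for $\mu_i\ge\lambda_i$, it inflates the arrival rate at queue $i$ to some $\hat\lambda_i>\mu_i$, observes that the inflated queue pathwise dominates the original, applies the first conclusion to the inflated system to get $\hat Q_i(t)/t\to\hat\lambda_i-\mu_i$, and lets $\hat\lambda_i\downarrow\mu_i$. Your direct last-empty-time argument is more elementary and avoids constructing an auxiliary system, once the target of the contradiction is stated correctly.
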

We omit the proof due to space limitation. 

\subsection{Channel-aware CSMA Algorithms: A-CSMA}

The algorithm to decide $\bsigma(t)$ utilizing the local carrier
sensing information can be classified as CSMA (Carrier Sense Multiple
Access) algorithms. 
In between two transmissions, a queue waits for a random amount of time
-- also known as {\em backoff time}. Each queue can sense the medium
perfectly and instantly, \ie, knows if any other interfering queue is
transmitting at a given time instance. If a queue that finishes waiting
senses the medium to be busy, it starts waiting for another random
amount of time; else, it starts transmitting for a random amount of
time, called {\em channel holding time}. 
We assume that queue $i$'s backoff and channel holding times have
exponential distributions with mean $1/R_i$ and $1/S_i$, respectively,
where $R_i=R_i(t)>0$ and $S_i=S_i(t)>0$ may change over time. We define
A-CSMA (channel-aware CSMA) to be the class of CSMA algorithms where
$R_i(t)$ and $S_i(t)$ are decided by some functions of the current
channel capacity, \ie, $R_i(t)= f_i(c_i(t))$ and $S_i(t)=g_i(c_i(t))$
for some functions $f_i$ and $g_i$. In the special case when $R_i(t)$
and $S_i(t)$ are decided independently of current channel information
(\eg, $f_i$'s and $g_i$'s are constant functions),
we specially say a CSMA algorithm is U-CSMA (channel-unaware CSMA).



Then, given functions $[f_i]$ and $[g_i]$, it is easy to check that\\ 
$\{(\bsigma(t),\bc(t)): t\geq 0\}$ under A-CSMA is a continuous time
Markov process, whose kernel (or transition-rates) is given by: 
\begin{eqnarray}
\label{eq:ucsma_kernel}
(\bsigma,\boldsymbol{u}) &\rightarrow& (\bsigma,\boldsymbol{v})~~\text{with
  rate}~~\ratech^{\boldsymbol{u} \to \boldsymbol{v}} \cr
(\bsigma^0_i,\bc) &\rightarrow& (\bsigma^1_i,\bc)~~\text{with
  rate}~~f_i(c_i) \cdot \prod_{j:(i,j)\in E}(1-\sigma_j)\cr
(\bsigma^1_i,\bc) &\rightarrow& (\bsigma^0_i,\bc)~~\text{with
  rate}~~g_i(c_i) \cdot \sigma_i,
\end{eqnarray}
where $\bsigma^0_i$ and $\bsigma^1_i$ denote two `almost' identical schedule
vectors except $i$-th elements which are 0 and 1,
respectively. Since $\{\bc(t)\}$ is a time-homogeneous irreducible Markov process,
$\{(\bsigma(t),\bc(t))\}$ is ergodic, \ie, it has the unique stationary distribution
$[\pi_{\bsigma,\bc}]$. For example, when functions $f_i$ and $g_i$ are constant (\ie, 
U-CSMA with fixed $R_i(t)=R_i$ and $S_i(t)=S_i$), 
$$\pi_{\bsigma,\bc}~=~\pi_{\bc} \cdot
\frac{\exp\left(\sum_{i} {\sigma_i} \log \frac{R_i}{S_i}\right)}{\sum_{\brho=[\rho_i]\in\cI(G)}\exp\left(\sum_{i} {\rho_i} \log \frac{R_i}{S_i}\right)},$$
and if $\{\bc(t)\}$ is (time-)reversible,
$\{(\bsigma(t),\bc(t))\}$ is as well.
In general, $\{(\bsigma(t),\bc(t))\}$ is not reversible unless functions $f_i/g_i$ are constant, 
as we state in the following theorem.

\begin{theorem}\label{thm:nonrev}
If $\{(\bsigma(t),\bc(t))\}$ is reversible,
$$\frac{f_i(x)}{g_i(x)}=\frac{f_i(y)}{g_i(y)},
\qquad\mbox{for all}~x,y \in
\sch,i\in V.$$
\end{theorem}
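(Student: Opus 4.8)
The plan is to apply \emph{Kolmogorov's cycle criterion}. Since $\{(\bsigma(t),\bc(t))\}$ is ergodic it has a stationary distribution of full support, so if it is reversible then detailed balance holds on every edge; in particular a transition has positive rate if and only if the reverse transition does, and for \emph{every} finite cycle of states the product of the forward rates equals the product of the backward rates. I would exploit this for a single, carefully chosen short cycle.

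Fix a link $i$ and two channel vectors $\boldsymbol{u},\boldsymbol{v}\in\sch^{n}$ with $\ratech^{\boldsymbol{u}\to\boldsymbol{v}}>0$ (hence also $\ratech^{\boldsymbol{v}\to\boldsymbol{u}}>0$ by the remark above). Let $\bsigma=\mathbf{0}$ be the empty schedule, so that $\bsigma^{0}_{i}=\mathbf{0}$ and $\bsigma^{1}_{i}$ is the schedule activating only link $i$; both lie in $\cI(G)$, and since all neighbours of $i$ are idle in these two schedules, the activation rate out of $(\bsigma^{0}_{i},\bc)$ is exactly $f_{i}(c_{i})$ (the interference product equals $1$) and the deactivation rate out of $(\bsigma^{1}_{i},\bc)$ is exactly $g_{i}(c_{i})$. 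Consider the $4$-cycle
\[
(\bsigma^{0}_{i},\boldsymbol{u})\;\to\;(\bsigma^{1}_{i},\boldsymbol{u})\;\to\;(\bsigma^{1}_{i},\boldsymbol{v})\;\to\;(\bsigma^{0}_{i},\boldsymbol{v})\;\to\;(\bsigma^{0}_{i},\boldsymbol{u}),
\]
whose four rates, in order, are $f_{i}(u_{i})$, $\ratech^{\boldsymbol{u}\to\boldsymbol{v}}$, $g_{i}(v_{i})$, $\ratech^{\boldsymbol{v}\to\boldsymbol{u}}$, while the reversed cycle has rates $\ratech^{\boldsymbol{u}\to\boldsymbol{v}}$, $f_{i}(v_{i})$, $\ratech^{\boldsymbol{v}\to\boldsymbol{u}}$, $g_{i}(u_{i})$. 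Kolmogorov's criterion equates the two products; cancelling the positive factor $\ratech^{\boldsymbol{u}\to\boldsymbol{v}}\,\ratech^{\boldsymbol{v}\to\boldsymbol{u}}$ gives $f_{i}(u_{i})\,g_{i}(v_{i})=f_{i}(v_{i})\,g_{i}(u_{i})$, i.e.\ $f_{i}(u_{i})/g_{i}(u_{i})=f_{i}(v_{i})/g_{i}(v_{i})$ (all values of $f_{i},g_{i}$ are positive by assumption).

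It then remains to upgrade this ``per channel-edge'' identity to arbitrary pairs. Given $x,y\in\sch$, take the constant vectors $\boldsymbol{u}=(x,\dots,x)$ and $\boldsymbol{v}=(y,\dots,y)$; by irreducibility of $\{\bc(t)\}$ there is a path $\boldsymbol{u}=\bc^{(0)}\to\cdots\to\bc^{(K)}=\boldsymbol{v}$ with every $\ratech^{\bc^{(k)}\to\bc^{(k+1)}}>0$ (and, by reversibility, every reverse rate positive as well). Applying the previous step to each consecutive pair $\bc^{(k)},\bc^{(k+1)}$ shows that $f_{i}(c^{(k)}_{i})/g_{i}(c^{(k)}_{i})$ is constant along the path, whence $f_{i}(x)/g_{i}(x)=f_{i}(y)/g_{i}(y)$, which is the claim.

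The only genuine idea is the choice of cycle: it uses that the channel transition rates $\ratech^{\boldsymbol{u}\to\boldsymbol{v}}$ do not depend on the schedule, so ``make the same channel move, once with link $i$ on and once with it off'' yields two competing expressions for the stationary-probability ratio between the two schedules that reversibility must reconcile. Everything else---verifying the cycle is legitimate (all four edges present, which is why one first records that reversibility makes positive rates come in pairs) and the connectivity extension---is routine, and the support bookkeeping is the only place requiring minor care. A convenient variant that sidesteps that bookkeeping is to run the same computation through detailed balance on the stationary distribution $[\pi_{\bsigma,\bc}]$ directly, where the schedule-independence of the ratio $\pi_{\bsigma,\boldsymbol{u}}/\pi_{\bsigma,\boldsymbol{v}}$ along a channel edge is immediate.
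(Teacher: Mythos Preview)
Your proof is correct and is essentially the paper's own argument: both rest on the same $4$-cycle $(\bsigma^{0}_{i},\boldsymbol{u})\to(\bsigma^{1}_{i},\boldsymbol{u})\to(\bsigma^{1}_{i},\boldsymbol{v})\to(\bsigma^{0}_{i},\boldsymbol{v})\to(\bsigma^{0}_{i},\boldsymbol{u})$ and the schedule-independence of $\ratech^{\boldsymbol{u}\to\boldsymbol{v}}$; the paper simply writes the detailed-balance equations along the two $2$-step paths rather than invoking Kolmogorov. Your version is slightly more careful in one respect: the paper takes $\boldsymbol{u}=\bc^{u}_{i}$ and $\boldsymbol{v}=\bc^{v}_{i}$ differing only in coordinate $i$ and tacitly assumes $\ratech^{\bc^{u}_{i}\to\bc^{v}_{i}}>0$, whereas your irreducibility-chaining step dispenses with that assumption.
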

\begin{proof}
  We prove this by contradiction.  Denote by $\bc^u_i$ and $\bc^v_i$ two
  almost identical channel state vectors except $i$-th elements, which
  are $\chs_u$ and $\chs_v,$ respectively.  Suppose that $\{(\bsigma(t),\bc(t))\}$ 
is reversible and $\frac{f_i(\chs_u)}{g_i(\chs_u)}
  \neq \frac{f_i(\chs_v)}{g_i( \chs_v)}$ for some link $i$. From the reversibility, the
  transition path $(\bsigma_i^0,\bc^u_i) \to (\bsigma_i^0,\bc^v_i) \to
  (\bsigma_i^1,\bc^v_i)$ has to satisfy the following balance equations:
\begin{align}\label{eq:b1}
&\pi_{\bsigma_i^0,\bc^u_i}\ratech^{\bc^u_i \to \bc^v_i}  =
\pi_{\bsigma_i^0,\bc^v_i}\ratech^{\bc^v_i \to \bc^u_i}
\cr &\pi_{\bsigma^0_i,\bc_i^v} f_i(\chs_v)  =
\pi_{\bsigma^1_i,\bc_i^v} g_i(\chs_v),\end{align}
Similarly, for the transition path $(\bsigma_i^0,\bc^u_i) \to
(\bsigma_i^1,\bc^u_i) \to (\bsigma_i^1,\bc^v_i)$,
\begin{align} \label{eq:b2}
&\pi_{\bsigma_i^0,\bc^u_i}f_i(\chs_u)  =
\pi_{\bsigma_i^1,\bc^u_i}g_i(\chs_u), ~\mbox{and}~\cr &\pi_{\bsigma^1_i,\bc_i^u}\ratech^{\bc_i^u \to \bc_i^v}  =  \pi_{\bsigma^1_i,\bc_i^v}\ratech^{\bc_i^v \to \bc_i^u}.
\end{align}
From \eqref{eq:b1} and \eqref{eq:b2}, 
\begin{eqnarray}\label{eq:b3}
\frac{\pi_{\bsigma_i^0,\bc^u_i}}{\pi_{\bsigma_i^1,\bc^v_i}} =
\frac{\ratech^{\bc_i^v \to \bc_i^u}  g_i(\chs_v)}{\ratech^{\bc_i^u \to
    \bc_i^v} f_i(\chs_v)} =\frac{\ratech^{\bc_i^v \to \bc_i^u}g_i(\chs_u)}{\ratech^{\bc_i^u \to
    \bc_i^v}f_i(\chs_u)},
\end{eqnarray}
which contradicts the assumption $\frac{f_i(\chs_u)}{g_i(\chs_u)} \neq
\frac{f_i(\chs_v)}{g_i(\chs_v)}$. This completes the proof of Theorem \ref{thm:nonrev}.
\end{proof}

We note that the non-reversible property makes it hard to characterize
the stationary distribution $[\pi_{\bsigma,\bc}]$ of the Markov process induced by A-CSMA.

\section{Achievable Rate Region of A-CSMA}

In this section, we study the achievable rate region of
A-CSMA algorithms given (fixed) functions $[f_i]$ and $[g_i]$.
We show that the achievable rate region of A-CSMA is maximized for the
following choices of functions: 
\begin{equation}\label{eq:csma}
  \log\frac{f_i (x)}{g_i(x)}~=~r_i \cdot x,~~\mbox{for}~x\in [0,1],\end{equation}
where $r_i\in \mathbb R$ is some constant. 
Namely, the ratio ${f_i (x)}/{g_i(x)}$ is an exponential function in terms of $x$.
We let EXP-A-CSMA denote the sub-class of A-CSMA algorithms with functions satisfying \eqref{eq:csma} for some $[r_i]$.
The following theorem justifies the optimality of EXP-A-CSMA in terms of its
achievable rate region.
\begin{theorem}[Optimality]\label{thm3}
  For any arrival rate $\blambda
  =[\lambda_i]\in \bLambda^o,$ interference graph $G$,
  and channel transition-rate $\bgamma$, there exists $[r_i]$, $[f_i]$ and $[g_i]$ satisfying \eqref{eq:csma}
  such that the corresponding EXP-A-CSMA algorithm is rate-stable. 
\end{theorem}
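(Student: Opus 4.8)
\medskip\noindent\emph{Sketch of the intended proof.}
The plan is to combine Lemma~\ref{lem2} with a time-scale-separation analysis of the A-CSMA chain in the fast-backoff regime. For a scaling parameter $N>0$ and a vector $[r_i]\in\mathbb{R}^n$, both to be fixed later, take the EXP-A-CSMA algorithm with $f_i(x)=N$ and $g_i(x)=Ne^{-r_ix}$, which satisfies \eqref{eq:csma}. Since the kernel \eqref{eq:ucsma_kernel} shows that $\{(\bsigma(t),\bc(t))\}$ evolves autonomously of $\bQ$ and is ergodic with a stationary law $\pi^{(N)}$ (also depending on $\br$), the ergodic theorem gives $\lim_{t\to\infty}\tfrac1t\widehat D_i(t)=\sum_{\bsigma,\bc}\pi^{(N)}_{\bsigma,\bc}\,\sigma_i c_i=:\mu^{(N)}_i(\br)$ almost surely. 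By Lemma~\ref{lem2} it therefore suffices to exhibit $\br$ and $N$ with $\mu^{(N)}_i(\br)>\lambda_i$ for every $i$.

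\smallskip\noindent
The first and technically hardest step is to identify $\lim_{N\to\infty}\mu^{(N)}_i(\br)$. I claim that $\pi^{(N)}_{\bsigma,\bc}\to\pi_{\bc}\,\bar\pi^{\bc}_{\bsigma}$ as $N\to\infty$, where
\[
\bar\pi^{\bc}_{\bsigma}~=~\frac{\exp\big(\sum_i\sigma_i r_i c_i\big)}{\sum_{\brho\in\cI(G)}\exp\big(\sum_i\rho_i r_i c_i\big)}
\]
is exactly the product-form stationary distribution of static CSMA with the channel frozen at $\bc$ (the U-CSMA formula of Section~\ref{sec:model} with $R_i=f_i(c_i)=N$ and $S_i=g_i(c_i)=Ne^{-r_ic_i}$). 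As $\{(\bsigma(t),\bc(t))\}$ is non-reversible (Theorem~\ref{thm:nonrev}), $\pi^{(N)}$ has no product form, so I would prove the claim through the Markov-chain tree theorem \cite{AT89PMCT}: $\pi^{(N)}_{\bsigma,\bc}$ is the sum of the weights of all spanning in-trees (arborescences) rooted at $(\bsigma,\bc)$, normalized over the choice of root, a weight being the product of the transition rates along the edges. Schedule edges carry rates proportional to $N$ and remain inside a single ``channel layer'' $\{(\cdot,\bc)\}$, whereas channel edges carry $O(1)$ rates; hence an arborescence weight is a monomial in $N$ of degree equal to its number of schedule edges, which is at most $|\sch^n|\,(|\cI(G)|-1)$, with equality precisely when the schedule edges span every layer and the remaining $|\sch^n|-1$ channel edges glue the layers into an arborescence on $\sch^n$. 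Summing the leading coefficients over the dominant arborescences and using that (i) the channel edge leaving a layer has a rate independent of the schedule and (ii) the tree theorem applied inside one frozen layer returns $\bar\pi^{\bc}_{\bsigma}$ while, applied to the channel chain on $\sch^n$, it returns $\pi_{\bc}$, one finds the leading coefficient equals $(\text{constant})\times\bar\pi^{\bc}_{\bsigma}\,\pi_{\bc}$ with the \emph{same} constant for every root; dividing by the normalizer and letting $N\to\infty$ yields the claim (the schedule-marginal limit $\sum_{\bc}\pi_{\bc}\bar\pi^{\bc}_{\bsigma}$ being the convex combination of product forms announced in the introduction). Consequently $\mu^{(N)}_i(\br)\to\bar\mu_i(\br):=\sum_{\bc\in\sch^n}\pi_{\bc}\sum_{\bsigma\in\cI(G)}\bar\pi^{\bc}_{\bsigma}\,\sigma_i c_i$.

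\smallskip\noindent
It remains to choose $\br$ so that $[\bar\mu_i(\br)]>\blambda$ componentwise. For this I would use the standard exponential-family argument, now for the channel-averaged log-partition function $\Phi(\br):=\sum_{\bc\in\sch^n}\pi_{\bc}\log\sum_{\brho\in\cI(G)}\exp\big(\sum_i r_i c_i\rho_i\big)$, which is convex and smooth on $\mathbb{R}^n$ with $\nabla\Phi(\br)=[\bar\mu_i(\br)]$; a direct conjugacy computation gives $\mathrm{dom}\,\Phi^{*}=\sum_{\bc}\pi_{\bc}\,\mathrm{conv}\{\bc^{T}\cdot\brho:\brho\in\cI(G)\}=\cH$, so that $\{[\bar\mu_i(\br)]:\br\in\mathbb{R}^n\}\supseteq\mathrm{int}(\cH)$, and $\cH$ is full-dimensional because it contains $\mathbf{0}$ and each $\sum_{\bc}\pi_{\bc}\,\bc^{T}\cdot\bm{e}_i$. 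Given $\blambda\in\bLambda^o$, pick $\blambda'\in\cH$ with $\blambda<\blambda'$, then $\by\in\mathrm{int}(\cH)$ close enough to $\blambda'$ that $\by>\blambda$, then $\br$ with $[\bar\mu_i(\br)]=\by$, and finally $N$ large enough that $\mu^{(N)}_i(\br)>\lambda_i$ for every $i$; Lemma~\ref{lem2} then gives rate-stability of the resulting EXP-A-CSMA algorithm.

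\smallskip\noindent
The main obstacle is the second step: rigorously pinning down the layered structure of the dominant arborescences and the factorization of their leading coefficients, i.e.\ exactly the place where non-reversibility makes $\pi^{(N)}$ intractable and where ``approximation by a product form at large backoff rates'' has to be justified. The third step is then routine convex duality, essentially identical to the static-channel CSMA analysis once $\Phi$ and $\Phi^{*}$ are in hand.
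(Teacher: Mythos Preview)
Your proposal is correct and follows essentially the same route as the paper: the paper's Lemma~\ref{lem:thrwithin} is precisely your time-scale-separation step (proved via the Markov chain tree theorem by isolating the arborescences with the minimal number $m^n-1$ of channel edges, so that the dominant weight factorizes as $\pi_{\bc}\pi_{\bsigma|\bc}$), and the paper's Lemma~\ref{lem:thropt} is your third step (the paper maximizes $F(\br)=\blambda'\cdot\br-\Phi(\br)$ directly and bounds the maximizer, whereas you phrase it as $\nabla\Phi(\mathbb{R}^n)=\mathrm{int}(\cH)$ via Legendre duality, but these are the same argument). The only cosmetic differences are that the paper gives an explicit finite threshold on $\min\{f_i,g_i\}$ rather than a limit $N\to\infty$, and an explicit bound on $|r_i|$ rather than mere existence---both refinements are needed later for the dynamic algorithms of Section~4 but are immaterial for Theorem~\ref{thm3} itself.
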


We also establish that Theorem \ref{thm3} is tight in the sense that it
does not hold for other A-CSMA algorithms that have different ways of
reflecting channel capacity in adjusting CSMA parameters. 
To state it formally, given a
non-negative continuous function $k: [0,1]\rightarrow \mathbb R_+$, we
define EXP($k$)-A-CSMA as the sub-class of A-CSMA algorithms with
the following form of functions:
\begin{equation}  \log\frac{f_i (x)}{g_i(x)}~=~r_i \cdot k(x),~~\mbox{for}~x\in [0,1],\label{eq:kcsma}\end{equation}
where $r_i\in \mathbb R$ is some constant.
The following theorem states that EXP-A-CSMA is the unique class of A-CSMA
maximizing its achievable rate region.
\begin{theorem}[Uniqueness]\label{thm:nonlinear}
If the conclusion of Theorem \ref{thm3} holds for EXP($k$)-A-CSMA, then 
$$\mbox{EXP($k$)-A-CSMA}~=~\mbox{EXP-A-CSMA.}$$
\end{theorem}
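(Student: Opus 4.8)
\emph{Proof proposal.}
The plan is to establish the contrapositive in sharp form: if the conclusion of Theorem~\ref{thm3} holds for EXP($k$)-A-CSMA, then $k(x)=k(1)\,x$ on $[0,1]$. This suffices, since the two sub-classes coincide exactly when the function lines $\{r\,k(\cdot):r\in\mathbb R\}$ and $\{r'\,(\cdot):r'\in\mathbb R\}$ (in the space of ratio functions per link) are equal, i.e.\ when $k$ is a positive multiple of the identity. Because the Markov chain induced by A-CSMA is non-reversible (Theorem~\ref{thm:nonrev}) and has no clean stationary formula, I would reuse the product-form approximation underlying the proof of Theorem~\ref{thm3}: when the backoff rates are scaled up, the stationary distribution is close to a convex combination of product forms, so the achievable rate region of EXP($k$)-A-CSMA is contained in the closed convex hull of the service-rate vectors of the product-form with per-state fugacities $e^{r_i k(c_i)}$; and along a fixed ray $[r_i]\to\infty$, in channel state $\bc$ the realized schedule concentrates on $\arg\max_{\brho\in\cI(G)}\sum_i\rho_i\,r_i k(c_i)$. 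For Theorem~\ref{thm3} to hold, this region must equal $\cH(\bgamma,G)$ for \emph{every} $G$ and $\bgamma$; I will use two particular choices of $G$.

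First step (two links). Take $G$ to be a single edge and let the channel process visit all of $\sch^2$ with positive stationary mass. The only non-trivial schedules are $\{1\}$ and $\{2\}$, and the vertex of $\cH$ in a direction $(w_1,w_2)$ with $w_i>0$ serves link $1$ precisely on the states with $c_1/c_2>w_2/w_1$, while EXP($k$)-A-CSMA (in the ray limit) serves link $1$ precisely on the states with $k(c_1)/k(c_2)>r_2/r_1$. Matching these for all thresholds forces $k(c_1)/k(c_2)$ to be a strictly increasing function of $c_1/c_2$ on $\sch\times\sch$; since $\sch$ is an arbitrary finite subset of $(0,1]$ containing $1$ and $k$ is continuous, letting $\sch$ become dense shows that $\Psi(x):=k(x)/k(1)$ satisfies $\Psi(xy)=\Psi(x)\Psi(y)$ for $x,y\in(0,1]$, with $\Psi$ continuous and increasing. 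Hence $\Psi(x)=x^\alpha$ for some $\alpha>0$, i.e.\ $k(x)=k(1)\,x^\alpha$ (and $k(0)=0$).

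Second step (three links). Suppose $\alpha\neq1$; I would reach a contradiction on the path graph $1$--$2$--$3$, whose only maximal schedules are $\{1,3\}$ and $\{2\}$. Encode a channel state $\bc$ by its reduced coordinates $(u,v)=(c_1/c_2,\,c_3/c_2)$. Capacity-region vertices pick $\{1,3\}$ exactly where $w_1 u+w_3 v>w_2$, whereas EXP($x^\alpha$)-A-CSMA picks $\{1,3\}$ exactly where $r_1 u^\alpha+r_3 v^\alpha>r_2$, i.e.\ along a straight line in the transformed coordinates $(u^\alpha,v^\alpha)$. Choose three joint channel states with reduced coordinates $P_1,P_2,P_3$ so that $(u_2^\alpha,v_2^\alpha)$ is a strict convex combination of $(u_1^\alpha,v_1^\alpha)$ and $(u_3^\alpha,v_3^\alpha)$, while $P_2$ lies strictly off the segment $\overline{P_1P_3}$ --- possible because $t\mapsto t^{1/\alpha}$ is strictly concave for $\alpha>1$ and strictly convex for $\alpha<1$, so the $\alpha$-th-root image of the chord bulges to one side (e.g.\ take $(u_2^\alpha,v_2^\alpha)$ the midpoint, with $P_1$ up-left and $P_3$ down-right). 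Then the schedule pattern ``$\{2\}$ at states $1,3$ and $\{1,3\}$ at state $2$'' for $\alpha>1$ (its reverse for $\alpha<1$) is the unique maximizer of some $(w_1,w_2,w_3)>0$, hence a vertex $v^\star$ of $\cH$; yet it is realizable by no $[r_i]$, since an affine functional $r_1 u^\alpha+r_3 v^\alpha-r_2$ cannot be strictly positive at a point that is an affine combination of two points where it is strictly negative. Adding the remaining states of $\sch^3$ with forced schedules (or with negligible stationary mass), and using that the set of limiting schedule patterns is finite, every EXP($x^\alpha$)-A-CSMA-achievable service vector $q$ satisfies $\langle w,q\rangle\leq\langle w,v^\star\rangle-\varepsilon$ for a fixed $\varepsilon>0$. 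Then $\blambda:=v^\star-\tfrac{\varepsilon}{2\|w\|^2}\,w$ is componentwise below $v^\star\in\cH$, hence in $\bLambda^o$, but is dominated by no achievable $q$; so EXP($x^\alpha$)-A-CSMA is not rate-stable for this $\blambda$, contradicting Theorem~\ref{thm3}. Therefore $\alpha=1$, $k(x)=k(1)\,x$, and EXP($k$)-A-CSMA $=$ EXP-A-CSMA.

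The main obstacle is the second step: constructing the three-state gadget and, more delicately, arguing that the offending capacity vertex is not merely unreachable in the ray limit but is not \emph{dominated} by any A-CSMA service vector over all parameter choices. This is where the product-form characterization from the proof of Theorem~\ref{thm3} and a finiteness/compactness argument over the finite family of limiting schedule patterns do the real work; by comparison, the two-link step is just a short functional-equation argument once the rate-region characterization is available.
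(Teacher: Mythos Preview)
Your overall strategy is sound and would work, but it takes a genuinely different route from the paper's proof. The paper does everything on a single \emph{star graph} with center $1$ and leaves $2,\dots,n$, channel grid $\sch=\{1/m,2/m,\dots,1\}$, and uniform $\pi_{\bc}$. Applying the hypothesis to the arrival rate $\blambda=\arg\max_{\blambda\in(1-\varepsilon)\cH}\sum_i\lambda_i$ and using Markov's inequality forces, for every channel state $\bc$, the sign of $r_1k(c_1)-\sum_{i\ge2}r_ik(c_i)$ to match the sign of $c_1-\sum_{i\ge2}c_i$. Varying the leaf channels and sending $m\to\infty$ then yields directly the \emph{additive} Cauchy equation $k(x+y)=k(x)+k(y)$ (together with $k(0)=0$), hence linearity.

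By contrast, you split the argument in two: on a single edge you extract a \emph{multiplicative} Cauchy equation $\Psi(xy)=\Psi(x)\Psi(y)$ (so $k(x)=k(1)x^\alpha$), and then on the $3$-path you eliminate $\alpha\neq1$ via a separability/convexity obstruction in the $(u^\alpha,v^\alpha)$-plane. Both approaches exploit the freedom to choose $G$, $\sch$, and $\bgamma$, but the paper's is more economical --- one graph, one functional equation, no geometric gadget --- while yours has the appeal of isolating ``power law'' versus ``exponent $=1$'' as separate phenomena. Your self-identified obstacle (showing the offending capacity vertex is not dominated, not merely unreachable, by any achievable service vector) is real but resolvable exactly as you outline: since $v^\star$ uniquely maximizes $\langle w,\cdot\rangle$ over the finitely many schedule patterns and your convex-combination argument rules out approximating that pattern, a uniform gap $\varepsilon>0$ follows. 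The paper sidesteps this entirely by comparing scalar sums $\sum_i\lambda_i$ rather than vector domination.
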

The proofs of Theorems~\ref{thm3} and \ref{thm:nonlinear} are given
in Sections \ref{sec:pfthm3} and \ref{sec:pfthm:nonlinear}, respectively.
For the proof of Theorem \ref{thm3},
Sections \ref{sec:pflemthropt} and 
\ref{sec:pflemmas} describe the proofs of necessary lemmas, 
Lemmas~\ref{lem:thropt} and \ref{lem:thrwithin}.
In the following proofs (and throughout this paper), we commonly let 
$[\pi_{\bsigma,\bc}]$, $[\pi_{\bc}]$ and
$[\pi_{\bsigma|\bc}]$ be the stationary distributions of Markov
processes $\{(\bsigma(t),\bc(t))\}$, $\{\bc(t)\}$ and
$\{\bsigma(t),\bc\}$ induced by an A-CSMA algorithm, respectively.

\subsection{Proof of Theorem \ref{thm3}}\label{sec:pfthm3}

To begin with, we recall that
the channel varying speed $\psi$ is defined as:
$\psi = \max_{\boldsymbol{u} \in \sch^n} \{ \sum_{\boldsymbol{v} \in
  \sch^n:\boldsymbol{v} \neq \boldsymbol{u}} \ratech^{\boldsymbol{u} \to
  \boldsymbol{v}}\}.$
We first state Lemmas~\ref{lem:thropt} and \ref{lem:thrwithin}, which
are the key lemmas to the proof of Theorem \ref{thm3}.
\begin{lemma}\label{lem:thropt}
	For any $\delta_1\in(0,1)$, arrival rate $\blambda=[\lambda_i]\in(1-\delta_1)\bLambda^o$,
	 interference graph $G$ and channel transition-rate $\bgamma$, 
	there exists $[r_i]\in \mathbb R^n$ such that 
	$$\max_i \left|r_i \right|~\leq~
	\frac{4n^2\log |\cI(G)|}{\delta_1^2 \min\limits_i
          \{\left(\sum_{\bc \in \sch^n} c_i \pi_{\bc}\right)^2\}},
	$$
	and every EXP-A-CSMA algorithm with 
	$$\log \frac{f_i(\chs)}{g_i( \chs)}= r_i \cdot \chs, \quad
        \mbox{ for all $i\in V, \chs\in \sch$}$$
	satisfies 
	\begin{align*}
&\lambda_{i} ~\leq~ \sum_{\bc \in \sch^n}c_i\pi_{\bc}\sum_{\bsigma \in \cI(G): \sigma_i = 1}\pi_{\bsigma|\bc},
\qquad\mbox{for all}~i\in V.\qquad\qquad\qquad
\end{align*}
\end{lemma}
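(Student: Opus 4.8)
The plan is to reduce the assertion to a statement about a single hardcore (product‑form) measure, and then to read off $[r_i]$ together with the norm bound from a concave program. Throughout write $\bar c_i:=\sum_{\bc\in\sch^n}c_i\pi_{\bc}$ and $\bar c_{\min}:=\min_i\bar c_i$. First I would note that, with the channel frozen at a state $\bc$, the $\bsigma$‑chain driven by the transitions in~\eqref{eq:ucsma_kernel} is reversible, so (exactly as in the static computation recalled just before Theorem~\ref{thm:nonrev}, with $R_i=f_i(c_i)$, $S_i=g_i(c_i)$) its stationary law is
\[
\pi_{\bsigma|\bc}\;=\;\frac{\exp\!\big(\sum_i r_i c_i\sigma_i\big)}{\sum_{\brho\in\cI(G)}\exp\!\big(\sum_i r_i c_i\rho_i\big)},\qquad\bsigma\in\cI(G),
\]
which depends only on $\mathbf{r}=[r_i]$ and not on the common scale of $f_i,g_i$. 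Hence the right‑hand side of the claimed inequality equals $\nu_i(\mathbf{r}):=\partial_{r_i}\Psi(\mathbf{r})$, where $\Psi(\mathbf{r}):=\sum_{\bc\in\sch^n}\pi_{\bc}\log\sum_{\brho\in\cI(G)}\exp(\sum_j r_j c_j\rho_j)$, and the lemma reduces to finding $\mathbf{r}$ with $\max_i|r_i|$ bounded as stated and $\nu_i(\mathbf{r})\ge\lambda_i$ for all $i$. Since $\mathbf{0}$ and the singleton schedules $e_i$ lie in $\cI(G)$ and $\chs_1>0$, the vectors $\{c_i e_i\}$ span $\mathbb{R}^n$, so each summand of $\Psi$, and hence $\Psi$ itself, is strictly convex (here $\pi_{\bc}>0$ for every $\bc$ because $\{\bc(t)\}$ is irreducible).

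Next I would locate a target rate vector well inside the capacity region. Two elementary facts about $\cH=\cH(\bgamma,G)$ are needed: (i) $\cH$ is downward closed in $\mathbb{R}^n_+$ — $\cI(G)$ is closed under passing to sub‑schedules and contains $\mathbf{0}$, so each per‑channel marginal polytope is downward closed (thin out the chosen independent sets by independent coin flips), and $\cH$, a $\pi$‑weighted Minkowski combination of them, inherits this; (ii) $\bar c_i e_i\in\cH$ for each $i$ (schedule the singleton $\{i\}$ in every channel state), whence by (i) $\varepsilon\mathbf{1}\in\cH$ whenever $\varepsilon\le\bar c_{\min}/n$ (write $\varepsilon\mathbf{1}$ as a convex combination of $\mathbf{0}$ and the $\bar c_i e_i$). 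Because $\blambda\in(1-\delta_1)\bLambda^o$, fact~(i) gives $\blambda/(1-\delta_1)\in\cH$. Put $\varepsilon:=\delta_1\bar c_{\min}/(2n)$ and
\[
\bm{\mu}\;:=\;(1-\delta_1)\,\frac{\blambda}{1-\delta_1}\;+\;\delta_1\,\frac{\bar c_{\min}}{2n}\,\mathbf{1}\;=\;\blambda+\varepsilon\mathbf{1}.
\]
Then $\bm{\mu}\in\cH$ (convex combination of $\blambda/(1-\delta_1)\in\cH$ and $\tfrac{\bar c_{\min}}{2n}\mathbf{1}\in\cH$), $\bm{\mu}\ge\blambda$, and in fact the whole box $\{\bm{\mu}+\mathbf{s}:\|\mathbf{s}\|_\infty\le\varepsilon\}$ lies in $\cH$, because its top corner $\bm{\mu}+\varepsilon\mathbf{1}=\blambda+\tfrac{\bar c_{\min}}{n}\mathbf{1}$ is again in $\cH$ while $\cH$ is downward closed and $\mu_i\ge\varepsilon$ for all $i$.

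Finally I would solve the concave program $\max_{\mathbf{r}}F(\mathbf{r})$ with $F(\mathbf{r}):=\langle\bm{\mu},\mathbf{r}\rangle-\Psi(\mathbf{r})$. From $\Psi(\mathbf{r})\ge\sum_{\bc}\pi_{\bc}\max_{\brho\in\cI(G)}\langle\mathbf{r},\bc^T\!\cdot\brho\rangle=\max_{z\in\cH}\langle\mathbf{r},z\rangle$ and, using the box, $\max_{z\in\cH}\langle\mathbf{r},z\rangle\ge\langle\mathbf{r},\bm{\mu}\rangle+\varepsilon\|\mathbf{r}\|_1$, one gets $F(\mathbf{r})\le-\varepsilon\|\mathbf{r}\|_1$; since $F(\mathbf{0})=-\Psi(\mathbf{0})=-\log|\cI(G)|$, the maximum of the concave $F$ over $\mathbb{R}^n$ is attained at some $\mathbf{r}^\star$ with $\|\mathbf{r}^\star\|_1\le\log|\cI(G)|/\varepsilon$. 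At $\mathbf{r}^\star$ we have $\nabla F(\mathbf{r}^\star)=0$, i.e.\ $\nu_i(\mathbf{r}^\star)=\mu_i\ge\lambda_i$ for all $i$ — the desired inequality — and
\[
\max_i|r_i^\star|\;\le\;\|\mathbf{r}^\star\|_1\;\le\;\frac{\log|\cI(G)|}{\varepsilon}\;=\;\frac{2n\log|\cI(G)|}{\delta_1\bar c_{\min}}\;\le\;\frac{4n^2\log|\cI(G)|}{\delta_1^2\bar c_{\min}^2},
\]
the last step because $\delta_1\bar c_{\min}\le 2n$; taking $f_i,g_i$ with $\log(f_i(\chs)/g_i(\chs))=r_i^\star\chs$ finishes the proof.

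The step I expect to be the real obstacle is the construction of $\bm{\mu}$: one must turn the slack in $\blambda\in(1-\delta_1)\bLambda^o$ into a target $\bm{\mu}\ge\blambda$ that is not merely feasible but sits at $\ell_\infty$‑distance of order $\delta_1\bar c_{\min}/n$ from the boundary of $\cH$. Staying inside the naive box $[0,\blambda/(1-\delta_1)]\subseteq\cH$ would leave $\min_i\lambda_i$ (possibly arbitrarily small) rather than $\bar c_{\min}$ in the denominator, so it is the presence of the scaled singletons $\bar c_i e_i$ in $\cH$ — and hence the hypothesis $\chs_1>0$ — that makes the stated bound attainable. The remaining ingredients (reversibility of the frozen chain, strict convexity of $\Psi$, coercivity of $F$) are routine.
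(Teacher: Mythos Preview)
Your proof is correct and shares the paper's overall architecture: construct a target $\bm{\mu}\ge\blambda$ lying well inside $\cH$ with each coordinate bounded below by something of order $\delta_1\bar c_{\min}/n$, define $F(\mathbf r)=\langle\bm{\mu},\mathbf r\rangle-\Psi(\mathbf r)$, and read off $\mathbf r^\star$ from the first-order condition $\nabla F(\mathbf r^\star)=0$. The one substantive difference is in how $\|\mathbf r^\star\|_\infty$ is bounded. The paper argues in two stages: first $\max_i r_i^\star\le 2\log|\cI(G)|/(\delta_1\bar c_{\min})$ using the slack $\widehat\pi_{\mathbf 0|\bc}\ge\delta_1/2$, and then $\min_i r_i^\star\ge -4n^2\log|\cI(G)|/(\delta_1^2\bar c_{\min}^2)$ by plugging the first bound back into $F$; the second stage is what introduces the extra factors of $n/\delta_1\bar c_{\min}$. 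You instead embed an entire $\ell_\infty$-ball of radius $\varepsilon=\delta_1\bar c_{\min}/(2n)$ around $\bm{\mu}$ in $\cH$ and invoke the support-function identity $\Psi(\mathbf r)\ge\max_{z\in\cH}\langle\mathbf r,z\rangle\ge\langle\mathbf r,\bm{\mu}\rangle+\varepsilon\|\mathbf r\|_1$ to obtain $F(\mathbf r)\le -\varepsilon\|\mathbf r\|_1$ in one stroke. This is more conceptual and in fact gives the sharper bound $\|\mathbf r^\star\|_\infty\le\|\mathbf r^\star\|_1\le 2n\log|\cI(G)|/(\delta_1\bar c_{\min})$; the stated bound in the lemma is only recovered after your final (deliberately wasteful) inequality $\delta_1\bar c_{\min}\le 2n$.
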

\begin{lemma}\label{lem:thrwithin}
	For any $\delta_2\in(0,1)$, interference graph $G$ and channel transition-rate $\bgamma$ and
	 A-CSMA algorithm with functions $\bof =[f_i]$ and $\bog=[g_i]$
         satisfying 
	$$\min_{i\in V,\chs \in \sch}\{f_i(\chs) , g_i(\chs)\}\geq
        \frac{\psi \cdot m^{2^{n}m^n(n+1)} }{\delta_2},$$ it follows that
	$$\max_{(\bsigma,\bc)\in \mathcal I(G)\times\sch^n}\left|1-\frac{\pi_{\bsigma,\bc}}{\pi_{\bc}\pi_{\bsigma|\bc}}\right|~ <~ \delta_2.$$
\end{lemma}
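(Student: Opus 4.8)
The plan is to use the Markov chain tree theorem (MCTT)~\cite{AT89PMCT} to express all three stationary distributions in the statement as normalized sums of arborescence weights, and then to identify the arborescences of the joint chain that dominate when the backoff/holding rates are large: I expect these to be exactly the ones that factor through the layered structure of the state space, so that their normalized total is precisely $\pi_{\bc}\pi_{\bsigma|\bc}$. Write $\Omega=\mathcal I(G)\times\sch^{n}$ for the state space of $\{(\bsigma(t),\bc(t))\}$ and put $R:=\min_{i\in V,\chs\in\sch}\{f_i(\chs),g_i(\chs)\}$, so that the hypothesis reads $R\ge\psi\,m^{2^{n}m^{n}(n+1)}/\delta_2$. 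From the kernel~\eqref{eq:ucsma_kernel} each directed edge of $\Omega$ is of exactly one of two kinds: a \emph{CSMA edge}, which flips one coordinate of $\bsigma$ inside a fixed layer $\mathcal I(G)\times\{\bc\}$ and whose rate equals $f_i(c_i)$ or $g_i(c_i)$, hence lies in $[R,\infty)$ (the neighbour product in an on-rate equals $1$ whenever both endpoints are legal states); or a \emph{channel edge}, which keeps $\bsigma$ fixed and moves between layers with rate $\gamma^{\boldsymbol u\to\boldsymbol v}\le\psi$. By MCTT, $\pi_{\bsigma,\bc}=W_{\Omega}(\bsigma,\bc)/\sum_{\bsigma',\bc'}W_{\Omega}(\bsigma',\bc')$ with $W_{\Omega}(\bsigma,\bc)$ the sum of edge-rate products over spanning in-arborescences of $\Omega$ rooted at $(\bsigma,\bc)$; likewise $\pi_{\bc}=W_{\mathrm{ch}}(\bc)/\sum_{\bc'}W_{\mathrm{ch}}(\bc')$ for the channel chain on $\sch^{n}$, and $\pi_{\bsigma|\bc}=\Phi_{\bc}(\bsigma)/Z_{\bc}$ with $Z_{\bc}:=\sum_{\bsigma'}\Phi_{\bc}(\bsigma')$ for the frozen-channel CSMA chain on $\mathcal I(G)$.

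First I would isolate the leading term. Contracting each layer to a point shows that any spanning in-arborescence of $\Omega$ rooted at $(\bsigma,\bc)$ uses at least $m^{n}-1$ channel edges, and those attaining this minimum are rigid: inside every layer the CSMA edges form a spanning in-arborescence of $\mathcal I(G)$, rooted at $\bsigma$ in layer $\bc$ and at a free ``exit vertex'' $\bsigma_{\bc'}$ in each other layer $\bc'$, and the $m^{n}-1$ channel edges run from $(\bsigma_{\bc'},\bc')$ to $(\bsigma_{\bc'},\tau(\bc'))$ along a spanning in-arborescence $\tau$ of $\sch^{n}$ rooted at $\bc$. Because channel rates do not depend on $\bsigma$, the total weight $W^{\mathrm{lead}}(\bsigma,\bc)$ of these arborescences factorises: summing out the exit vertices gives $\prod_{\bc'\ne\bc}Z_{\bc'}$, the CSMA part of layer $\bc$ gives $\Phi_{\bc}(\bsigma)$, and the channel part gives $W_{\mathrm{ch}}(\bc)$, so $W^{\mathrm{lead}}(\bsigma,\bc)=\Phi_{\bc}(\bsigma)\,W_{\mathrm{ch}}(\bc)\prod_{\bc'\ne\bc}Z_{\bc'}$. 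Normalising against $\sum_{\bsigma,\bc}W^{\mathrm{lead}}(\bsigma,\bc)=\big(\prod_{\bc'}Z_{\bc'}\big)\sum_{\bc}W_{\mathrm{ch}}(\bc)$ and using $\sum_{\bsigma}\pi_{\bsigma|\bc}=1$, the leading contribution to $\pi_{\bsigma,\bc}$ is exactly $\pi_{\bc}\pi_{\bsigma|\bc}$. It therefore remains to bound, uniformly in $(\bsigma,\bc)$, the relative weight $W^{\mathrm{err}}(\bsigma,\bc)/W^{\mathrm{lead}}(\bsigma,\bc)$ of the arborescences with $\ge m^{n}$ channel edges.

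The error estimate is the core of the argument. I would show that any arborescence $T$ rooted at $(\bsigma,\bc)$ with $k\ge m^{n}$ channel edges can be reduced, by $k-(m^{n}-1)$ ``local repairs'' that each delete one surplus channel edge and reconnect the two resulting pieces using CSMA edges of a single layer, to a product-structure arborescence with the \emph{same root}; each repair divides the weight by the deleted channel rate ($\le\psi$) and multiplies it by CSMA rates together with the factor produced by reversing a path inside a layer. Here the reversibility of the frozen-channel chain is essential: its detailed-balance identity collapses that path factor to a ratio of $\pi_{\cdot|\bc}$-values, which is then absorbed into the $Z_{\bc'}$ appearing in $W^{\mathrm{lead}}$. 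Summing over $T$, with the number of in-arborescences of $\Omega$ and the number of reconnection choices at each step bounded by fixed powers of $|\Omega|\le 2^{n}m^{n}$, I expect to obtain $W^{\mathrm{err}}(\bsigma,\bc)\le (\psi/R)\,m^{2^{n}m^{n}(n+1)}\,W^{\mathrm{lead}}(\bsigma,\bc)$ — this crude count is exactly where the constant in the hypothesis comes from — hence $W^{\mathrm{err}}(\bsigma,\bc)\le\delta_2\,W^{\mathrm{lead}}(\bsigma,\bc)$. Since the bound is uniform, $\sum W_{\Omega}$ and $\sum W^{\mathrm{lead}}$ differ by at most the same relative amount, and the elementary inequality $\big|\tfrac{1+a}{1+A}-1\big|\le\delta_2$ for $a,A\in[0,\delta_2]$ gives $\big|1-\pi_{\bsigma,\bc}/(\pi_{\bc}\pi_{\bsigma|\bc})\big|<\delta_2$.

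The repair step will be the main obstacle. One must reconnect the two pieces created by deleting a channel edge using only CSMA edges of the current layer, \emph{without moving the global root} and \emph{without paying more than one factor of $\psi/R$ per deleted channel edge}; a careless reconnection can cost an uncontrolled factor of order $(\max_{i,\chs}f_i(\chs)/g_i(\chs))^{\Theta(|\mathcal I(G)|)}$, and the remedy — routing along shortest paths in the connected CSMA graph on $\mathcal I(G)$ and invoking detailed balance of the frozen chain so that the path-length-dependent factors telescope and cancel against $W^{\mathrm{lead}}$ — is the delicate part. The remaining bookkeeping is to verify that the multiplicities and rate ratios thus produced are dominated by $m^{2^{n}m^{n}(n+1)}$, and to dispose of the degenerate case $m=1$ (a static channel), where the claim is immediate.
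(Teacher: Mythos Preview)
Your approach is the same as the paper's: apply the Markov chain tree theorem, isolate the arborescences with exactly $m^n-1$ channel edges as the leading term, show that this normalized total equals $\pi_{\bc}\pi_{\bsigma|\bc}$, and bound the remainder. Your exit-vertex factorisation is correct and in fact supplies the justification for the identity $\pi_{\bc}\pi_{\bsigma|\bc}=w(\mathcal A^{(m^n-1)}_{\bsigma,\bc})\big/\sum_{\brho,\bm d}w(\mathcal A^{(m^n-1)}_{\brho,\bm d})$ that the paper merely asserts.

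Where you diverge is in the error estimate. The paper dispatches it in one line---step~(a), justified only by ``from the condition in Lemma''---asserting that each arborescence with $i>m^n-1$ channel edges has weight at most $(\psi/R)^{\,i-(m^n-1)}$ times the \emph{total} leading weight, and then absorbs the count $|\mathcal A^{(i)}|$ into the constant $m^{2^n m^n(n+1)}$. You instead propose an explicit repair procedure and flag the reconnection as the delicate point. That concern is legitimate: deleting an excess channel edge leaves an orphaned root whose CSMA neighbours may all lie in its own forest component (e.g.\ on $\cI(G)$ for $G$ the $3$-vertex path, the $2$-forest with blocks $\{000,010\}$ rooted at $010$ and $\{100,001,101\}$ rooted at $101$ has neither root adjacent to the other block), so a path reversal is genuinely required and the paper never addresses it.

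Your detailed-balance idea is the right tool for that reversal, but you have not actually carried it through---you say you ``expect to obtain'' the bound and leave the absorption of the $\pi_{\cdot|\bc'}$-ratio into $Z_{\bc'}$ unspecified. The paper, for its part, does not invoke reversibility of the frozen chain at all; it simply asserts the per-arborescence comparison. So the two arguments agree in outline; yours is more explicit about the one step that is not fully justified in either.
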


Lemma \ref{lem:thrwithin} implies that if $f_i,g_i$ are large enough,
the stationary distribution $[\pi_{\bsigma,\bc}]$
approximates to a product-form distribution $[\pi_{\bc}\pi_{\bsigma|\bc}]$, 
where under EXP-A-CSMA,
$$\pi_{\bsigma|\bc}~\propto~\exp\left(\sum_i \sigma_i r_i c_i\right),$$
due to the reversibility of Markov process $\{\bsigma(t),\bc\}$.
On the other hand, Lemma \ref{lem:thropt} implies that
arrival rate $\blambda$ is stabilized under the distribution $[\pi_{\bc}\pi_{\bsigma|\bc}]$. 
Therefore, combining two above lemmas will lead to the proof of
Theorem \ref{thm3}.


We remark that Lemma \ref{lem:thropt} is a non-trivial generalization of Lemma 8 in
\cite{JSSW10DRA} (for static channels), which corresponds to a special
case of Lemma \ref{lem:thropt} with $\pi_{\bc}=1$ for $\bc=[1]$. 


\smallskip
\noindent {\bf \em Proof of Theorem \ref{thm3}.}
We now complete the proof of Theorem \ref{thm3} using
Lemmas \ref{lem:thropt} and \ref{lem:thrwithin}.  For a given
arrival rate $\blambda\in \bLambda^o$, 
there exists $\varepsilon\in(0,1)$ such that $\blambda\in
(1-\varepsilon)\bLambda^o$ since $\blambda\in \bLambda^o$.  If we apply
Lemmas \ref{lem:thropt} and \ref{lem:thrwithin} with
$(1+\varepsilon)\blambda\in(1-\varepsilon^2)\bLambda^o$ (\ie, 
$\delta_1=\varepsilon^2$ and
$\delta_2=\frac{\varepsilon}{1+\varepsilon}$), we have that there exists
an EXP-A-CSMA algorithm with constant $[r_i]$ and functions $[f_i]$ and
$[g_i]$ such that
\begin{eqnarray*}
	\eta&\leq& \min_{i\in V,\chs\in \sch} \{f_i(\chs), g_i(\chs)\} \\
(1+\varepsilon)\lambda_{i} &\leq& \sum_{\bc \in \sch^n}c_i\pi_{\bc}\sum_{\bsigma \in \cI(G): \sigma_i = 1}\pi_{\bsigma|\bc},
\end{eqnarray*}
where we choose $$f_i(c_i)=R= \eta\exp(\kappa),~~
g_i(c_i)=R\cdot\exp(-r_i\cdot c_i),$$
$$\kappa=\kappa(\delta_1,G,\bgamma):= 
	\frac{4n^2\log |\cI(G)|}{\delta_1^2 \min\limits_i
          \{\left(\sum_{\bc \in \sch^n} c_i \pi_{\bc}\right)^2\}}, $$
and
$$\eta=\eta(\delta_2,G,\bgamma):=\frac{\psi \cdot m^{2^{n}m^n(n+1)} }{\delta_2}. $$
Therefore, it follows that
\begin{eqnarray*}
	\lambda_{i} &\leq&\left(1-\frac{\varepsilon}{1+\varepsilon}\right)\sum_{\bc \in \sch^n}c_i\pi_{\bc}\sum_{\bsigma \in \cI(G): \sigma_i = 1}\pi_{\bsigma|\bc}\\
	&<&~ \sum_{\bc\in\sch^n}c_i\pi_{\bc}\sum_{\bsigma \in \cI(G): \sigma_i = 1}\pi_{\bsigma,\bc}\\
	&=&~\lim_{t\to\infty} \frac1t \widehat D_{i}(t),
\end{eqnarray*}
where the last inequality is from the ergodicity of Markov process $\{(\bsigma(t),\bc(t))\}$.
This leads to the rate-stability using Lemma \ref{lem2}, and hence
completes the proof.

\subsection{Proof of Lemma \ref{lem:thropt}}\label{sec:pflemthropt}  
We use a similar strategy with that of Lemma 8 in \cite{JSSW10DRA}. 
Since $\blambda\in (1-\delta_1)\bLambda^o$, there exists $\blambda^{\prime}=[\lambda^{\prime}_i]\in (1-\delta_1/2)\cH$ 
such that $\blambda\leq \blambda^{\prime}$ and 
$$\lambda^{\prime}_i\geq \frac{\delta_1}{2n}\cdot \sum_{\bc \in \sch^n}c_i\pi_{\bc} \qquad\mbox{for all}~i\in V.$$
For such a choice of $\blambda^{\prime}$, we consider the following function $F:\mathbb R^n\to\mathbb R$:
\begin{equation*}
  F(\bm{r}) = \bm{\lambda}^{\prime}\cdot\bm{r} - \sum_{\bc \in \sch^n }\pi_{\bc}\log \left( \sum_{\bsigma \in \cI(G)} \exp\left(\sum_{i}\sigma_i c_i r_i\right) \right).
\end{equation*}

One can easily check that
$F$ is strictly concave and bounded above.
Hence, there exists a unique maximizer $\bm{r}^*\in\mathbb R^n$ 
such that $F(\bm{r}^*)=\sup_{\bm{r}\in\mathbb R^n}F(\bm{r})$ and $\nabla F(\bm{r}^*)=0$. We prove the following.
\begin{eqnarray}
	\max_i r_i^*&\leq& \frac{2\log |\cI(G)|}{\delta_1 \min\limits_i \left\{\sum_{\bc \in \sch^n} c_i \pi_{\bc}\right\}}\label{eq1:pflemthropt}\\
	\min_i r_i^*&\geq & -\frac{4n^2\log
          |\cI(G)|}{\delta_1^2\min\limits_i \left\{\left(\sum_{\bc \in \sch^n} c_i \pi_{\bc}\right)^2\right\}}\label{eq2:pflemthropt}
\end{eqnarray}

\noindent {\bf Proof of (\ref{eq1:pflemthropt}).} 
Suppose there exists $i$ such that 
$$r_i> \frac{2\log |\cI(G)|}{\delta_1 \sum_{\bc \in \sch^n} c_i \pi_{\bc}}.$$ Then, $\bm{r}$ cannot be a maximizer of $F$ since
\begin{eqnarray*}  
	F(\bm{r})& =& \bm{\lambda}^{\prime}\cdot\bm{r} - \sum_{\bc \in
          \sch^n }\pi_{\bc}\log \left( \sum_{\bsigma \in
            \cI(G)} \exp\left(\sum_{i}\sigma_i c_i r_i\right) \right) \\
 &= & \sum_{\bc \in \sch^n }\pi_{\bc}\sum_{\brho \in \cI(G)}\widehat
 \pi_{\brho|\bc}\log \frac{\exp\left(\sum_{i}\rho_i c_i r_i\right)}{
     \sum_{\bsigma \in \cI(G)} \exp(\sum_{i}\sigma_i c_i r_i)}\\
 &\le& \sum_{\bc \in \sch^n }\pi_{\bc}\widehat \pi_{\bm{0}|\bc} \log
 \frac{\exp(0)}{ \sum_{\bsigma \in \cI(G)} \exp(\sum_{i}\sigma_i
     c_i r_i) }\\
 &\le& -\sum_{\bc \in \sch^n }\pi_{\bc}\widehat \pi_{\bm{0}|\bc} c_i r_i ~ \le~ -\frac{\delta_1}2 \cdot r_i \cdot \sum_{\bc \in \sch^n} c_i \pi_{\bc} \\
 &<& -\log|\cI(G)|~\le~ F(\bm{0}) ~\le~ \sup_{\br\in\mathbb R^n} F(\br).
\end{eqnarray*}
In above, for the second equality,  
there exist a non-negative valued measure $\widehat \pi$ such that for all $\bc \in \sch^n$
\begin{align}
\blambda^{\prime} = \sum_{\bc \in \sch^n}\pi_{\bc}\sum_{\bsigma \in \cI(G)}\widehat \pi_{\bsigma|\bc}[c_i\sigma_i]_{1 \leq i \leq n} ~~ \text{and} ~~ \widehat \pi_{\bm{0}|\bc} \ge \frac{\delta_1}2 
\end{align}
since $\blambda^{\prime} \in (1-\delta_1/2)\cH$.
This completes the proof of (\ref{eq1:pflemthropt}).

\noindent {\bf Proof of (\ref{eq2:pflemthropt}).}
From (\ref{eq1:pflemthropt}) it suffices to prove that 
$\bm{r}$ cannot be a maximizer of $F$ if there exists $j$ such that,
for all $k\neq j,$
$$r_j<-\frac{4n^2\log |\cI(G)|}{\delta_1^2\min\limits_i \left(\sum_{\bc \in \sch^n} c_i \pi_{\bc}\right)^2}~~\mbox{and}~~
r_k\leq \frac{2\log |\cI(G)|}{\delta_1\min\limits_i \{ \sum_{\bc \in
    \sch^n} c_i \pi_{\bc}\}}.$$
The proof is completed by the following:
\begin{eqnarray*}  
	F(\bm{r})& =& \bm{\lambda}^{\prime}\cdot\bm{r} - \sum_{\bc \in
          \sch^n }\pi_{\bc}\log \left( \sum_{\bsigma \in \cI(G)}
          \exp\left(\sigma_i c_i r_i\right) \right) \\
 &\leq &\bm{\lambda}^{\prime}\cdot\bm{r}\\
&\leq&(n-1)\cdot\frac{2\log |\cI(G)|}{\delta_1\min\limits_i \{ \sum_{\bc \in \sch^n} c_i \pi_{\bc} \}}+\lambda_j^{\prime}r_j\\
&\leq&(n-1)\cdot\frac{2\log |\cI(G)|}{\delta_1\min\limits_i \{ \sum_{\bc \in \sch^n} c_i \pi_{\bc} \}}+
\frac{\delta_1}{2n}\cdot  \sum_{\bc \in \sch^n} c_j \pi_{\bc} \cdot r_j\\
&<& -\log|\cI(G)|\le F(\bm{0})
\le \sup_{\br\in\mathbb R^n} F(\br).
\end{eqnarray*}

Then, from (\ref{eq1:pflemthropt}) and (\ref{eq2:pflemthropt}),
$$\max_i\left|r^*_i\right|\leq 
\frac{4n^2\log |\cI(G)|}{\delta_1^2\min\limits_i \left( \sum_{\bc \in \sch^n} c_i \pi_{\bc} \right)^2}.$$
Furthermore, computing the first derivative of $F$ gives us
\begin{align*}
0=\frac{\partial}{\partial r_i}F(\bm{r}^*)&=\lambda_i^{\prime}-
\sum_{\bc \in
  \sch^n}c_i\pi_{\bc}\frac{\sum\limits_{\bsigma\in\cI(G)}\sigma_i 
  \exp\left(\sum_{i}\sigma_i c_i r_i^*\right)}{\sum\limits_{\brho \in \cI(G)}
  \exp\left(\sum_{i}\rho_i c_i r_i^*\right)}\\
&=\lambda_i^{\prime}-
\sum_{\bc \in
  \sch^n}c_i\pi_{\bc}\sum_{\bsigma \in \cI(G): \sigma_i =
    1}\pi_{\bsigma|\bc},
\end{align*}
where we now choose an EXP-A-CSMA algorithm such that
$\log \frac{f_i(\chs)}{g_i(\chs)}=r^*_i \cdot \chs$. 
Therefore, it follows that 
\begin{eqnarray*}
  \lambda_i\leq\lambda_i^{\prime}=\sum_{\bc \in \sch^n}c_i \pi_{\bc}\sum_{\bsigma \in \cI(G): \sigma_i =
    1}\pi_{\bsigma|\bc} .
\end{eqnarray*}
This completes the proof of Lemma \ref{lem:thropt}.

\subsection{Proof of Lemma \ref{lem:thrwithin}}\label{sec:pflemmas}

Let $\mathcal{G}=(\mathcal{V},\mathcal{E})$ denote a weighted directed graph induced by
Markov process $\{(\bsigma(t),\bc(t))\}$: 
$\mathcal{V}=\cI(G)\times \sch^n$ and 

\noindent$((\bsigma_1,\bc_1), (\bsigma_2,\bc_2))\in\mathcal{E}$ if 
the transition-rate (which becomes the weight of the edge) from $(\bsigma_1,\bc_1)$ to 
$(\bsigma_2,\bc_2)$ is non-zero in Markov process $\{(\bsigma(t),\bc(t))\}$.
Hence, there are two types of edges:
\begin{itemize}
	\item[I.] $((\bsigma_1,\bc_1),(\bsigma_2,\bc_2))\in \mathcal{E}$ and $\bsigma_1=\bsigma_2$
	\item[II.] $((\bsigma_1,\bc_1),(\bsigma_2,\bc_2))\in \mathcal{E}$ and $\bc_1=\bc_2$
\end{itemize}
A subgraph of $\mathcal G$ is called \textit{arborescence} (or spanning tree) with root $(\bsigma,\bc)$
if for any vertex in $\mathcal V\setminus\{(\bsigma,\bc)\}$, there is exactly one directed path from 
the vertex to root $(\bsigma,\bc)$ in the subgraph.
Let $\mathcal{A}_{\bsigma,\bc}$ and
$w(\mathcal{A}_{\bsigma,\bc})$ denote the set of
\textit{arborescence}s of which root is $(\bsigma,\bc)$ and the 
sum of weights of \textit{arborescence}s in $\mathcal{A}_{\bsigma,\bc}$, where the weight of an
\textit{arborescence} is the product of weight of edges. Then, 
Markov chain tree theorem \cite{AT89PMCT} implies that
\begin{equation}
\pi_{\bsigma,\bc} =\frac{ w(\mathcal{A}_{\bsigma,\bc})}{\sum\limits_{(\brho,\bm{d}) \in \cI(G)\times \sch^n}
  w(\mathcal{A}_{\brho,\bm{d}})}.\label{eq:pipi}
\end{equation}

Now we further classify the set of \textit{arborescence}s. 
We let
$\mathcal{A}_{\bsigma,\bc}^{(i)}\subset\mathcal{A}_{\bsigma,\bc}$ denote the set of
\textit{arborescence}s consisting of $i$ edges of type I. Then, we have
\begin{align*}
& w(\mathcal{A}_{\bsigma,\bc}) =  \sum_{i \geq m^n-1}w(\mathcal{A}_{\bsigma,\bc}^{(i)}) \stackrel{(a)}{\le}w(\mathcal{A}_{\bsigma,\bc}^{(m^n -1)})+\\
& \qquad \sum_{i \ge m^n}\left(\frac{\delta_2}{m^{2^{n}m^n(n+1)} }\right)^{i+1-m^n}\cdot |\mathcal{A}_{\bsigma,\bc}^{(i)}| \cdot w(\mathcal{A}_{\bsigma,\bc}^{(m^n-1)})\\
 &\stackrel{}{\le}  w(\mathcal{A}_{\bsigma,\bc}^{(m^n-1)})\cdot \left( 1 + \sum_{i \ge m^n}
\left(\frac{\delta_2}{m^{2^{n}m^n(n+1)} }\right)^{i+1-m^n}\cdot |\mathcal{A}_{\bsigma,\bc}^{(i)}| \right)\\
 &{\le}  w(\mathcal{A}_{\bsigma,\bc}^{(m^n-1)})\cdot \left( 1 + \frac{\delta_2}{m^{2^{n}m^n(n+1)} }\cdot |\mathcal{A}_{\bsigma,\bc}| \right)\\
& \stackrel{(b)}{<} w(\mathcal{A}_{\bsigma,\bc}^{(m^n-1)})\cdot ( 1 + \delta_2 ),
\end{align*}
where $(a)$ is from the condition in Lemma \ref{lem:thrwithin} and
for $(b)$ we use the inequality $|\mathcal{A}_{\bsigma,\bc}| < (mn)^{2^{n}m^n} $. 
Therefore, using the above inequality, it follows that
\begin{eqnarray*}
	\frac{\pi_{\bsigma,\bc}}{\pi_{\bc}\pi_{\bsigma|\bc}} &=& \frac{ w(\mathcal{A}_{\bsigma,\bc})}{ w(\mathcal{A}_{\bsigma,\bc}^{(m^n-1)})}
	\cdot \frac{\sum\limits_{\bm{d} \in \sch^n}\sum\limits_{\brho \in \cI(G)}w(\mathcal{A}_{\brho,\bm{d}}^{(m^n-1)})}{\sum\limits_{\bm{d} \in \sch^n}\sum\limits_{\brho \in \cI(G)}w(\mathcal{A}_{\brho,\bm{d}})}\\
	&<&1+\delta_2, 
\end{eqnarray*}
where the first equality follows from (\ref{eq:pipi}) and 
$$\pi_{\bc}\pi_{\bsigma|\bc} =\frac{ w(\mathcal{A}_{\bsigma,\bc}^{(m^n-1)})}{\sum\limits_{\bm{d} \in \sch^n}
\sum\limits_{\brho \in \cI(G)}w(\mathcal{A}_{\brho,\bm{d}}^{(m^n-1)})}.$$
Similarly, one can also show that 
$\frac{\pi_{\bsigma,\bc}}{\pi_{\bc}\pi_{\bsigma|\bc}}~>~ 1-\delta_2.$
This completes the proof of Lemma \ref{lem:thrwithin}. 

\subsection{Proof of Theorem~\ref{thm:nonlinear}}\label{sec:pfthm:nonlinear}

Consider a star interference graph $G$, where $1$ denotes the center vertex and
$\{2,\dots,n\}$ is the set of other outer vertices. For the time-varying
channel model, we set each element of channel by $\chs_j =
\frac{j}{m}$ and assume the channel transition satisfies $\pi_{\bc} =
\frac{1}{m^n}.$ For the arrival rate, we choose $\blambda =
\arg\max\limits_{\blambda \in (1-\varepsilon)\cH} \sum_{i} \lambda_i$,
where $\varepsilon\in(0,1)$ will be chosen later.

Under this setup, 
suppose the conclusion of Theorem \ref{thm3} holds, \ie, 
there exists a rate-stable EXP($k$)-A-CSMA. 
Then, from the ergodicity of Markov process $\{(\bsigma(t),\bc(t))\}$, we have
\begin{equation}\label{eq:max}
	\lambda_{i} ~\leq~ \sum_{\bc \in \sch^n}\pi_{\bc}\sum_{\bsigma \in \cI(G)}c_i\sigma_i\pi_{\bsigma|\bc},
\qquad\mbox{for all}~i\in V. \end{equation}
Taking the summation over $i\in V$ in both sides of the above inequality and using
$\blambda  = \arg\max\limits_{\blambda \in (1-\varepsilon)\cH} \sum_{i}
\lambda_i$, it follows that
\begin{align*}
	\sum_{i}
	\lambda_i&~=~(1-\varepsilon)\sum_{\bc \in
    \sch^n}\pi_{\bc} \max_{\brho\in \cI(G)}\sum_{i}c_{i}\rho_i \\
&~\leq~ \sum_{\bc \in \sch^n}\pi_{\bc}\sum_{\bsigma \in \cI(G)}\pi_{\bsigma|\bc}\sum_ic_i\sigma_i. \end{align*}
By rearranging terms in the above inequality, we have
\begin{align*}
\frac{\varepsilon}{1-\varepsilon} \sum_i \lambda_i&\geq\sum_{\bc \in
\sch^n}\pi_{\bc} \left(\max_{\brho\in \cI(G)}\sum_{i}c_{i}\rho_i
-\sum_{\bsigma \in \cI(G)}\pi_{\bsigma|\bc}\sum_ic_i\sigma_i\right)\\
&=\sum_{\bc \in \sch^n}\pi_{\bc}\cdot
E\left[\max_{\brho \in \cI(G)} \sum_i c_i\rho_i   - \sum_ic_i\sigma_i\right],
\end{align*}
where the expectation is taken with respect to random variable $\bsigma=[\sigma_i]$ of which distribution is $[\pi_{\bsigma|\bc}]$.
Since we know $\max_{\brho \in \cI(G)} \sum_i c_i\rho_i   - \sum_ic_i\sigma_i\geq 0$ with probability 1, we further have
that for all channel state $\bc,$
\begin{eqnarray*}
E\left[\max_{\brho \in \cI(G)} \sum_i c_i\rho_i   - \sum_ic_i\sigma_i\right]&\le& \frac{\sum_{i} \lambda^{\max}_i \cdot \frac{\varepsilon}{1-\varepsilon}}{\pi_{\bc}} \\
&\le&\frac{n \cdot \frac{\varepsilon}{1-\varepsilon}}{\pi_{\bc}}= n \cdot m^n \cdot \frac{\varepsilon}{1-\varepsilon}.
\end{eqnarray*}
Markov's inequality implies that 
$$\Pr\left[\max_{\brho \in \cI(G)} \sum_ic_i\rho_i  - \sum_{i}
c_i\sigma_i  \ge \frac{1}{m}\right] ~\le~ n \cdot m^{n+1} \cdot \frac{\varepsilon}{1-\varepsilon}.$$
If we choose $\varepsilon = \frac{1}{4n \cdot m^{n+1}},$ then 
\begin{equation}\Pr\left[\max_{\brho
  \in \cI(G)} \sum_ic_i\rho_i - \sum_ic_i\sigma_i \ge
\frac{1}{m}\right] ~<~ \frac{1}{2}.\label{eq:epsil}\end{equation} 

In the star graph $G$, $\max_{\bsigma \in \cI(G)} \sigma_i c_i$ is $c_1$
or $\sum_{j=2}^{n}c_j$ and under the channel model, if $c_1 \neq
\sum_{j=2}^{n}c_j,$ $\left|c_1 - \sum_{j=2}^{n}c_j \right| \ge
\frac{1}{m}.$ If $c_1 > \sum_{i=2}^n c_i,$ $r_i$
Thus, from \eqref{eq:epsil},
if $c_1 > \sum_{i=2}^n c_i,$$P[\sigma_1 = 1] > \frac{1}{2}$ and
if $c_1 < \sum_{i=2}^n c_i,$ $P[\sigma_1 = 1] < \frac{1}{2}$, which
implies that
$r_1 k(c_1) \ge \sum_{i=2}^n  r_i k(c_i)$ and $r_1 k(c_1) \le
\sum_{i=2}^n r_i k(c_i),$ respectively. Therefore, for every channel
state $\bc$ with $0<\sum_{i=2}^{n}c_i<1,$
\sqeq
\begin{equation}r_1 k\left( \sum\limits_{i=2}^{n}c_i+ \frac{1}{m}
  \right) ~>~ \sum\limits_{i=2}^n r_i k(c_i) ~>~
  r_1k\left(\sum\limits_{i=2}^{n}c_i - \frac{1}{m}
  \right), \label{eq:lin} \end{equation}\unsqeq
which implies that $r_ik(x)$
is a strictly increasing function. In addition, $r_1k(c_1) >0$ from
\eqref{eq:epsil}, because $\sum_{\bsigma \in \cI(G)}
\sigma_1\pi_{\bsigma|\bc} > \pi_{\boldsymbol{0}|\bc} $ when $c_1 >
\sum_{i=2}^n c_i.$ Since $k(x)$ is non-negative and $r_i k(x)$ is
strict increasing for all link $i,$ $r_i>0.$ Thus, when we devide both
sides of \eqref{eq:lin} by $r_i,$
\begin{equation}\label{eq:kproperty}
 k\left( \sum\limits_{i=2}^{n}c_i+ \frac{1}{m} \right) >
\sum\limits_{i=2}^n \frac{r_i}{r_1} \cdot k(c_i) > k\left(\sum\limits_{i=2}^{n}c_i
  - \frac{1}{m} \right). \end{equation}
By choosing $x=c_2=\dots=c_n$ and taking $m \to \infty$ in \eqref{eq:kproperty}, 
it follows that $\lim_{m\to\infty} \sum\limits_{i=2}^n \frac{r_i}{r_1}$
exists,\footnote{Recall that $[r_i]$ is a function of $m$.} and for any $0<x<1/n$,
 $$ k\left( (n-1)x \right) =  \lim_{m\to\infty} \sum\limits_{i=2}^n \frac{r_i}{r_1}\cdot k(x),$$
where $ \lim_{m\to\infty} \sum\limits_{i=2}^n \frac{r_i}{r_1}>1$ since $k(x)$ is strictly increasing.
Hence, if we take $x\to 0$ in the above inequality, $k(0)=0$ follows.
Similarly, by choosing $x=c_2$, $c_3=\dots=c_n=1/m$ and taking $m \to \infty$ in \eqref{eq:kproperty},
it follows that that for any $0<x<1$,
$$k(x) = \lim_{m\to\infty} \frac{r_2}{r_1}\cdot k(x),$$
where we use $k(0)=0$ and $\limsup_{m\to\infty}\frac{r_i}{r_1} <\infty$ due to 
the existence of $ \lim_{m\to\infty} \sum\limits_{i=2}^n \frac{r_i}{r_1}$. Thus, $\lim_{m\to\infty} \frac{r_2}{r_1}=1$,
and more generally, $\lim_{m\to\infty} \frac{r_i}{r_1}=1$ using same arguments.
Furthermore, by choosing $x=c_2$, $y=c_3$, $c_4=\dots=c_n=1/m$,  
and taking $m \to \infty$ in \eqref{eq:kproperty},
we have that for any $0<x+y<1$,
$$k(x+y) = k(x)+k(y),$$
where we use $\lim_{m\to\infty} \frac{r_2}{r_1}=1$. This implies that $k(x)$
is a linear function (with $k(0)=0$), and hence the conclusion of Theorem \ref{thm:nonlinear} follows.
 


\newcommand{\Z}{\mathbb{Z}}
\newcommand{\hs}{\hat{s}}
\newcommand{\hlambda}{\hat{\lambda}}

\section{Dynamic Throughput Optimal A-CSMA}

In the previous section, it is shown that, for any feasible arrival
rate, there exists an EXP-A-CSMA algorithm stabilizing the arrivals. In
this section, we describe EXP-A-CSMA algorithms which dynamically update
its parameters so as to stabilize the network without knowledge of the
arrival statistics. More precisely, the CSMA scheduling algorithm uses
$f_i^{(t)}$ and $g_i^{(t)}$ to compute the value of parameters $R_i
(t)=f_i^{(t)}(c_i(t))$ and $S_i (t)=g_i^{(t)}(c_i(t))$ at time $t$,
respectively, and update them adaptively over time.  We present two
algorithms to decide $f_i^{(t)}$ and $g_i^{(t)}$. They are building upon
prior algorithms in conjunction with the
properties of EXP-A-CSMA established in the previous section, referred
to as  a rate-based (extension of \cite{JSSW10DRA}) and queue-based
algorithm (extension of \cite{Shin12}). 


\subsection{Rate-based Algorithm}\label{sec:rate-based}
The first algorithm, at each queue $i$,  updates $(f_i^{(t)},g_i^{(t)})$ at time 
instances $L(j), j \in \Z_+$ with $L(0)=0$.  Also, $(f_i^{(t)},g_i^{(t)})$ remains fixed 
between times $L(j)$ and $L(j+1)$ for all $j \in \Z_+$. To begin
with, the algorithm sets $f_i^{(0)}(x)=g_i^{(0)}(x)=1$ (i.e, $R_i(0) = S_i(0)=1$) for all $i$ and all $x\in[0,1]$. With an abuse of notation,
$f_i^{(j)}$ denotes the value of $f_i^{(t)}$
for all $t \in [L(j), L(j+1))$. Similarly, we use $g_i^{(j)}$.
Finally, define $T(j) = L(j+1) - L(j)$
for $j \geq 0$.   

Now we describe how to choose a varying update interval $T(j)$. 
We select 
$ T(j)=\exp\left(\sqrt{j}\right),$ for $j \geq 1,$ and choose  
a step-size $\alpha(j)$ of the algorithm  as 
$\alpha(j)=\frac1{j},$ for $j\geq 1.$ 
Given this, queue $i$ updates $f_i$ and $g_i$ as follows. 
Let $\hlambda_i(j), \hs_i(j)$ be empirical arrival and service
observed at queue $i$ in $[L(j), L(j+1))$, \ie,
\begin{align*} 
&\hlambda_i(j) = \frac{1}{T(j)} A_i(L(j), L(j+1)) 
\qquad\text{and}\cr &\hs_i(t) = \frac{1}{T(j)} \left[ \int_{L(j)}^{L(j+1)} \sigma_i(t)c_i(t)\, dt \right].\end{align*}
Then, the update rule is defined by, for $x \in [0,1]$
\begin{eqnarray}
g_i^{(j+1)}(x) &=& \frac{j+2}{j+1} \cdot g_{i}^{(j)}(x)\cdot \exp\left(x\cdot \alpha(j)\cdot(\hs_{i}(j)-\hlambda_{i}(j))\right)\cr
f_i^{(j+1)}(x) &= &j+2 ,
\label{equpdate1}
\end{eqnarray}
with initial condition $f_i^{(0)}(x)=g_i^{(0)}(x)=1$. It is easy to check
that the A-CSMA algorithm with functions $[f_i^{(j)}]$ and $[g_i^{(j)}]$
lies in EXP-A-CSMA:
$$\log \frac{f_i^{(j)}(x)}{g_i^{(j)}(x)}~=~r_i(j)\cdot x,$$
where $r_i(0)=0$ and 
$$r_i(j+1)~=~r_i(j)+\alpha(j)\cdot(\hlambda_{i}(j)-\hs_{i}(j)).$$

Note that, under this update rule, the algorithm at each queue $i$
uses only its local history. Despite this, 
we establish that this algorithm is rate-stable, as formally stated as follows:
\begin{theorem}\label{thm:timevaryingcsma}
	For any given graph $G$, channel transition-rate $\bgamma$ and 
	$\blambda\in  \bLambda^o(\bgamma,G)$,
	the A-CSMA algorithm with updating functions
as per \eqref{equpdate1} 
is rate-stable.
\end{theorem}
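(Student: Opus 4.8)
The plan is to read the update rule \eqref{equpdate1} as a stochastic-approximation recursion for $\bm{r}(j)=[r_i(j)]$ and to follow the rate-based analysis for static channels of \cite{JSSW10DRA}, with Lemma~\ref{lem:thrwithin} supplying the one extra fact about time-varying channels that is needed. By Lemma~\ref{lem2} it suffices to show $\liminf_{t\to\infty}\tfrac1t\widehat D_i(t)>\lambda_i$ almost surely for every $i$. Since $\blambda\in\bLambda^o(\bgamma,G)$, first fix $\varepsilon\in(0,1)$ with $(1+\varepsilon)\blambda\in\bLambda^o$, pick $\blambda'\ge(1+\varepsilon/2)\blambda$ lying in $(1-\delta_1)\bLambda^o$ for a suitable $\delta_1$, and recall from the proof of Lemma~\ref{lem:thropt} the strictly concave, bounded-above potential $F(\bm{r})=\blambda'\cdot\bm{r}-\sum_{\bc}\pi_{\bc}\log\!\big(\sum_{\bsigma\in\cI(G)}\exp(\sum_i\sigma_i c_i r_i)\big)$, with unique maximizer $\bm{r}^\ast$ satisfying $\nabla F(\bm{r}^\ast)=0$, i.e.\ $\sum_{\bc}c_i\pi_{\bc}\sum_{\bsigma:\sigma_i=1}\pi_{\bsigma|\bc}=\lambda_i'>\lambda_i$ at $\bm{r}=\bm{r}^\ast$.

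The first block of work is the standard stochastic-approximation bookkeeping, carried out as in \cite{JSSW10DRA}. Conditioned on $\bm{r}(j)$, the frame-$j$ process $\{(\bsigma(t),\bc(t)):t\in[L(j),L(j+1))\}$ is a time-homogeneous ergodic CTMC with the fixed parameters $f_i^{(j)}(x)=j+1$, $g_i^{(j)}(x)=(j+1)e^{-r_i(j)x}$; writing $s_i(\bm{r}(j))$ for its stationary mean service rate at $i$, the recursion reads $\bm{r}(j+1)-\bm{r}(j)=\alpha(j)\big(\blambda-s(\bm{r}(j))+b_j+M_j\big)$, where $M_j$ is a mean-zero fluctuation of the frame averages $\hlambda_i(j),\hs_i(j)$ around their frame-conditional means and $b_j$ is the burn-in/finite-window error from replacing $\hs_i(j)$ by $s_i(\bm{r}(j))$ (and $\hlambda_i(j)$ by $\lambda_i$). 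Two elementary a priori facts: (a) $|r_i(j+1)-r_i(j)|\le\alpha(j)(\hlambda_i(j)+1)$ together with $\hlambda_i(j)\to\lambda_i<1$ (ergodicity of arrivals) forces $|r_i(j)|\le 2\log j+O(1)$ a.s., so the fugacities $e^{r_i(j)c_i}$ in frame $j$ are at most $\mathrm{poly}(j)$ and all transition rates of the frame chain lie in $[\Omega(1/j),\mathrm{poly}(j)]$; (b) since the frame chain has a state space of fixed size, its relaxation time is therefore $\mathrm{poly}(j)$, while $T(j)=e^{\sqrt j}$ grows faster than any polynomial --- so a concentration bound for additive functionals of Markov chains plus Borel--Cantelli gives $b_j\to0$ and, with the usual maximal inequality, convergence of $\sum_j\alpha(j)M_j$, all almost surely. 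This is the part I expect to be the main technical obstacle: making the continuous-time, Markov-modulated concentration estimates precise and uniform over the slowly growing parameter window, the frame lengths $e^{\sqrt j}$ being calibrated precisely so that the error probabilities are summable.

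Next comes boundedness of $\{\bm{r}(j)\}$, which --- importantly --- does \emph{not} require Lemma~\ref{lem:thrwithin}. If $r_i(j)\to\infty$ along a subsequence, then on the frame chain link $i$ holds the medium for mean time $e^{r_i(j)c_i}/(j+1)$, far longer than any competing link, so $s_i(\bm{r}(j))$ approaches the ``greedy'' value $\sum_{\bc}c_i\pi_{\bc}\Pr_{\bc}[i\text{ wins}]$, which, because $\blambda$ lies strictly inside $\cH(\bgamma,G)$, exceeds $\lambda_i$ by a fixed amount; combined with $\hs_i(j)\to s_i(\bm{r}(j))$ and $\hlambda_i(j)\to\lambda_i$ this makes the drift $\hlambda_i(j)-\hs_i(j)$ strictly negative along that subsequence, contradicting $r_i(j)\to\infty$; symmetrically $r_i(j)\not\to-\infty$. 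More carefully, as in \cite{JSSW10DRA}, one runs this argument simultaneously over the links whose parameters are largest, using that interfering links cannot all be simultaneously underserved when $\blambda\in\bLambda^o$. Hence $\sup_j\|\bm{r}(j)\|<\infty$ almost surely.

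With $\{\bm{r}(j)\}$ bounded, $f_i^{(j)}=j+1\to\infty$ and $g_i^{(j)}(x)=(j+1)e^{-r_i(j)x}\to\infty$ uniformly, so now Lemma~\ref{lem:thrwithin} applies with an error parameter $\delta_2(j)\to0$: $s_i(\bm{r}(j))$ equals the product-form rate $\sum_{\bc}c_i\pi_{\bc}\sum_{\bsigma:\sigma_i=1}\pi_{\bsigma|\bc}$ evaluated at $\bm{r}(j)$, up to relative error $o(1)$. The recursion is then, asymptotically, noisy gradient ascent with vanishing step on the strictly concave $F$, so $\bm{r}(j)\to\bm{r}^\ast$ and $\liminf_j\hs_i(j)\ge(1-o(1))\,\lambda_i'\ge(1-o(1))(1+\varepsilon/2)\lambda_i>\lambda_i$ almost surely. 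Finally, since $T(j)$ is increasing, for $t\in[L(j),L(j+1))$ the quantity $\tfrac1t\widehat D_i(t)$ is a convex combination of $\hs_i(0),\dots,\hs_i(j)$ dominated by the most recent frames, whence $\liminf_{t\to\infty}\tfrac1t\widehat D_i(t)\ge\liminf_j\hs_i(j)>\lambda_i$; Lemma~\ref{lem2} then yields rate-stability, which would complete the argument.
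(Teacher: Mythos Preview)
Your overall strategy coincides with the paper's: cast \eqref{equpdate1} as a stochastic-approximation recursion on $\br(j)$, identify the mean drift as the gradient of a strictly concave potential $F$, control the burn-in and martingale terms via mixing of the frame chain against $T(j)=e^{\sqrt j}$, and use Lemma~\ref{lem:thrwithin} to replace the true stationary service rate by the product-form one once $f_i^{(j)},g_i^{(j)}\to\infty$. Your explicit discussion of a priori boundedness of $\{\br(j)\}$ \emph{before} invoking Lemma~\ref{lem:thrwithin}---which is genuinely needed, since under the crude bound $|r_i(j)|\lesssim 2\log j$ the quantity $g_i^{(j)}(x)=(j{+}1)e^{-r_i(j)x}$ need not diverge---is more careful than the paper's sketch, which simply asserts $g_i^{(j)}\to\infty$.

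There is, however, a real slip in the last step. You define $F$ with $\blambda'$, but the recursion's mean drift is $\lambda_i-s_i(\br(j))$ (as you yourself write), which is $\partial_{r_i}F_{\blambda}$, not $\partial_{r_i}F_{\blambda'}$. Hence the limit point of the gradient ascent is the maximizer of $F_{\blambda}$, at which the product-form service rate equals $\lambda_i$, not $\lambda_i'$; the conclusion $\liminf_j\hs_i(j)\ge(1-o(1))\lambda_i'>\lambda_i$ is therefore not supported, and the strict-inequality hypothesis of Lemma~\ref{lem2} is not delivered by this route. The paper sidesteps this by defining $F$ with $\blambda$ itself (see the appendix), stating $\br(j)\to\br^\ast$ with $s_i(\br^\ast)=\lambda_i$ as the key step, and then importing the rate-stability conclusion directly from the argument of \cite{JSSW10DRA} rather than through Lemma~\ref{lem2}. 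If you want to stay within your framework, drop $\blambda'$ and replace the final appeal to Lemma~\ref{lem2} by the direct $Q_i(t)/t\to0$ argument from \cite{JSSW10DRA}.
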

The proof is presented in Appendix. 

\subsection{Queue-based Algorithm}\label{sec:queue-based}

The second algorithm chooses $(f_i^{(t)}, g_i^{(t)})$ as a simple function of queue-sizes as follows.
\begin{align}
&	f_i^{(t)}(x)=\Big( g_i^{(t)}(x) \Big)^2
	\quad\mbox{and}\quad \cr 
&g_i^{(t)}(x)=\exp\left(x\cdot \max\left\{w(Q_i(\lfloor t\rfloor)),\sqrt{w(Q_{\max}(\lfloor t\rfloor))}\right\}\right),\cr
\label{equpdate2} 
\end{align}
where $Q_{\max}(\lfloor t\rfloor)=\max_j Q_{j}(\lfloor t\rfloor)$ and $w(x)=\log\log(x+e)$.
One can interpret this as an EXP-A-CSMA algorithm since
$$\log \frac{f_i^{(t)}(x)}{g_i^{(t)}(x)}~=~r_i(t)\cdot x,$$
where $r_i(t) = \max\left\{w(Q_i(\lfloor t\rfloor)),\sqrt{w(Q_{\max}(\lfloor t\rfloor))}\right\}$.
The global information of 
$Q_{\max}(\lfloor t\rfloor)$ can be replaced by its approximate estimation that
can computed through a very simple distributed algorithm (with message-passing)
in \cite{SRS09} or a learning mechanism (without message-passing) in \cite{shah2011medium}. This does not
alter the rate-stability of the algorithm that is stated in the
following theorem, whose proof is presented in Appendix. 

\begin{theorem}\label{thm:queuecsma}
	For any given graph $G$, channel transition-rate $\bgamma$ and 
	$\blambda\in  \bLambda^o(\bgamma,G)$,
	the A-CSMA algorithm with functions
as per \eqref{equpdate2} 
is rate-stable.
\end{theorem}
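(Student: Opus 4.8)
The plan is to adapt the Lyapunov-drift / fluid-limit machinery developed by Rajagopalan, Shah and Shin for queue-based CSMA in static channels (\cite{SRS09,Shin12}), feeding Lemmas~\ref{lem:thropt} and \ref{lem:thrwithin} into exactly the places where the static analysis relies on the closed-form product-form stationary distribution. The starting point is that $w(x)=\log\log(x+e)\to\infty$, so whenever $Q_{\max}(\lfloor t\rfloor)$ is large the multipliers $r_i(t)=\max\{w(Q_i(\lfloor t\rfloor)),\sqrt{w(Q_{\max}(\lfloor t\rfloor))}\}$ are large, and therefore $g_i^{(t)}(\chs)=\exp(\chs\,r_i(t))\ge\exp(\chs_1\sqrt{w(Q_{\max})})$ and $f_i^{(t)}(\chs)=(g_i^{(t)}(\chs))^2$ eventually exceed the threshold $\psi\, m^{2^{n}m^n(n+1)}/\delta_2$ of Lemma~\ref{lem:thrwithin} (recall $\psi$ is a fixed constant). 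Hence, on the event that the queues are large, the stationary distribution of the \emph{frozen} process $\{(\bsigma(t),\bc(t))\}$ (controls held at their values at the last integer time) is within a factor $1\pm\delta_2$ of the product form $\pi_{\bc}\,\pi_{\bsigma|\bc}$, with $\pi_{\bsigma|\bc}\propto\exp(\sum_i\sigma_i r_i(t)c_i)$ by reversibility of $\{\bsigma(t),\bc\}$.

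First I would fix the full Markov description $(\Q(t),\bsigma(t),\bc(t))$, noting that the control $r_i(\cdot)$ is piecewise constant (it changes only at integer times) and that the scheme lies in EXP-A-CSMA, as already observed in the text. Second --- the heart of the argument --- I would establish an adiabatic estimate: over a sliding window of length $\ell(Q_{\max})$ that grows slowly in $Q_{\max}$, the \emph{joint} chain $\{(\bsigma(t),\bc(t))\}$ reaches its near-stationary regime, so that the empirical service fractions $\tfrac{1}{\ell}\int\sigma_i(t)c_i(t)\,dt$ concentrate around $\sum_{\bc}\pi_{\bc}\sum_{\bsigma:\sigma_i=1}c_i\pi_{\bsigma|\bc}$; the key point is that the mixing time of the joint chain is at most a constant (the channel mixing, since $\psi$ is fixed) times a fixed-degree polynomial in the CSMA rates, which are $\exp(O(w(Q_{\max})))=\mathrm{polylog}(Q_{\max})$, so $\ell(Q_{\max})=\mathrm{polylog}(Q_{\max})=o(Q_{\max})$ and the window is negligible on the queue time scale. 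Third, I would run the fluid-limit argument of \cite{Shin12}: rescale $q^{(r)}(t)=\Q(rt)/r$, extract Lipschitz fluid limits (the arrivals contribute $\lambda_i t$ by the stationary-ergodic assumption, while departures are identified through the adiabatic estimate since $\ell(rq_{\max})/r\to 0$), and prove fluid-model stability via a Lyapunov function such as $\sum_i\int_0^{q_i}w(x)\,dx$ (or $\sum_i q_i$): as the $r_i$ diverge, $\pi_{\bsigma|\bc}$ concentrates on $\arg\max_{\bsigma\in\cI(G)}\sum_i\sigma_i r_i c_i$, so the realized schedules emulate Max-Weight with the channel-weighted queue lengths $r_i c_i$, and since $\blambda\in\bLambda^o$ gives $(1+\varepsilon)\blambda\in\cH$ for some $\varepsilon>0$, one obtains strictly negative drift away from the origin; hence $q(t)\equiv 0$ for $t$ past a finite time, hence $\Q(t)/t\to 0$ and rate-stability (equivalently, one checks $\liminf_t\tfrac1t\widehat D_i(t)>\lambda_i$ and invokes Lemma~\ref{lem2}). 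The floor $\sqrt{w(Q_{\max})}$ in $r_i(t)$ plays the same role as in \cite{Shin12}: it keeps every coordinate's weight from being dominated too severely by the maximum, making the Max-Weight emulation robust near the boundary of the fluid state space.

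I expect the main obstacle to be the adiabatic estimate in the presence of a genuinely coupled channel process: one must show the \emph{joint} chain $\{(\bsigma(t),\bc(t))\}$ --- not merely its CSMA component --- equilibrates to its near-product-form law within a window that is $o(Q_{\max})$, with constants uniform enough to survive the fluid rescaling. Lemma~\ref{lem:thrwithin} already identifies the stationary distribution; what remains is (i) a mixing-time bound for the joint chain, obtainable by a coupling or a relaxation-time / conductance estimate that treats channel transitions (rate $\le\psi$, constant) and CSMA transitions (rates $\exp(O(w(Q_{\max})))$) simultaneously, and (ii) a concentration inequality converting the stationary expectation into an empirical-average guarantee over the window. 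Secondary technical points --- the boundary behaviour of the fluid model when some $q_i$ hits $0$, and the merely stationary-ergodic (rather than Poisson or i.i.d.) arrivals --- are handled by the standard devices (Filippov solutions / time changes, and the ergodic theorem), exactly as in the static-channel proofs of \cite{SRS09,Shin12}.
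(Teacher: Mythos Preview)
Your proposal is correct and follows essentially the same route as the paper: both graft Lemmas~\ref{lem:thropt} and~\ref{lem:thrwithin} onto the queue-based fluid-limit machinery of \cite{Shin12}, using the fact that large $Q_{\max}$ forces all $f_i,g_i$ past the threshold of Lemma~\ref{lem:thrwithin} so that $\pi_{\bsigma,\bc}\approx\pi_{\bc}\pi_{\bsigma|\bc}$, and then showing the resulting schedules emulate Max-Weight with channel-weighted weights $w(Q_i)c_i$. The paper's presentation isolates the one genuinely new step as a ``generalized Lemma~7'' inequality, $\mathbb E[\sum_i w(Q_i)x_iy_i]\ge(1-\varepsilon)\mathbb E[\max_{\bz}\sum_i w(Q_i)z_iy_i]$ under $[\pi_{\bsigma,\bc}]$, and dispatches it by conditioning on the channel state $\by$ (so that Lemma~\ref{lem:thrwithin} reduces the conditional law of $\bx$ to the static-channel product form and the original Lemma~7 argument applies verbatim); your outline reaches the same conclusion but is more explicit about the surrounding mixing/adiabatic and fluid-limit scaffolding that the paper simply defers to \cite{Shin12}.
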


\section{Achievable Rate Region of A-CSMA with Limited Backoff Rate}
\label{sec:fast}

In practice, it might be hard to have arbitrary large backoff rate
because of physical constraints. From this motivation, in this
section, we investigate the achievable rate region of A-CSMA
algorithms with limited backoff rate. Note that, in the proof of
Theorem~\ref{thm3}, we choose the backoff rates $[f_i]$ to be
proportional to the channel varying speed. Thus, when the backoff rate
is limited and the channel varying speed grows up, we cannot guarantee
the optimality of EXP-A-CSMA.  The main result of this section is
that, even with highly limited backoff rate, say at most $\delta>0$,
EXP-A-CSMA is guaranteed to have at least $\alpha$-throughput, where
$\alpha$ is {\em independent} of the channel varying speed and the
maximum backoff rate $\delta$. More formally, we obtain the following
result.

\begin{theorem}\label{thm:achi-rate-regi}
	For any $\phi>0$, interference graph $G$, channel transition-rate $\bgamma$
	and arrival rate $\blambda \in \alpha\bLambda^o$,
	there exists a rate-stable EXP-A-CSMA algorithm with functions $[f_i]$ and $[g_i]$ such that
	$$\max_{i\in V, x\in[0,1]} f_i(x)\leq \phi,$$
	where 
	\begin{equation}\label{eq:ucsma}
	\alpha = \max\left\{\min\limits_{i \in V} \sum_{\bc \in \sch^n}c_i \pi_{\bc},\frac1{\chi(G)}\right\}.
	\end{equation}
	In above, $\chi(G)$ is the chromatic number of $G$.
\end{theorem}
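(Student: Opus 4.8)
The plan is to handle the two terms in $\alpha=\max\{\min_i\bar c_i,\ 1/\chi(G)\}$ separately, writing $\bar c_i:=\sum_{\bc\in\sch^n}\pi_\bc c_i$: since $\beta\bLambda^o$ is monotone in $\beta$, a vector $\blambda\in\alpha\bLambda^o$ lies in $\beta\bLambda^o$ for whichever $\beta\in\{\min_i\bar c_i,\ 1/\chi(G)\}$ realizes the maximum, so it suffices, for each such $\beta$, to build a rate-stable A-CSMA algorithm with $\max_{i,x}f_i(x)\le\phi$. The core of the argument is the combinatorial claim that \emph{if $\blambda\in\beta\bLambda^o$ then $[\lambda_i/\bar c_i]_i$ is strictly dominated, componentwise, by a point of $\mathrm{conv}(\cI(G))$.} To prove it, write $\blambda<\beta\blambda'$ with $\blambda'\in\cH$; since each $c_i\le1$, every element of $\cH$ is $\le$ some convex combination of independent sets, and in particular $\lambda'_i\le\bar c_i$. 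If $\beta=\min_j\bar c_j$ (so $\beta\le\bar c_i$) then $\lambda_i/\bar c_i<\beta\lambda'_i/\bar c_i\le\lambda'_i$; if $\beta=1/\chi(G)$ then $\lambda_i/\bar c_i<\lambda'_i/(\chi(G)\bar c_i)\le1/\chi(G)$, and for a proper colouring $V_1,\dots,V_{\chi(G)}$ with indicator vectors $\bone_{V_k}\in\cI(G)$ we have $\frac1{\chi(G)}\bone=\frac1{\chi(G)}\sum_k\bone_{V_k}\in\mathrm{conv}(\cI(G))$, so again $[\lambda_i/\bar c_i]_i$ sits strictly below a point of $\mathrm{conv}(\cI(G))$.

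Given the claim, fix $\blambda''$ with $\lambda_i/\bar c_i<\lambda''_i$ for all $i$ and $\blambda''$ still strictly below a point of $\mathrm{conv}(\cI(G))$, and take a \emph{channel-unaware} CSMA --- the degenerate A-CSMA obtained from kernel \eqref{eq:ucsma_kernel} with constant $f_i\equiv R_i$, $g_i\equiv S_i$. By the static CSMA theory (Lemma~8 in \cite{JSSW10DRA}, the static specialization of our Lemma~\ref{lem:thropt}), the ratios can be chosen as $R_i/S_i=\exp(r_i^*)$ so that the product-form law $\pi_\bsigma\propto\prod_i(R_i/S_i)^{\sigma_i}$ has $\sum_{\bsigma\in\cI(G):\sigma_i=1}\pi_\bsigma\ge\lambda''_i$; since rescaling all $R_i,S_i$ by a common small constant does not change $\pi_\bsigma$, we may also assume $\max_iR_i\le\phi$. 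Because $f_i,g_i$ are constant, the generator in \eqref{eq:ucsma_kernel} acts on the schedule and channel coordinates independently, so $\{\bsigma(t)\}$ and $\{\bc(t)\}$ are independent Markov processes and $\pi_{\bsigma,\bc}=\pi_\bsigma\pi_\bc$; by ergodicity, $\lim_{t\to\infty}\frac1t\widehat D_i(t)=\big(\sum_\bsigma\pi_\bsigma\sigma_i\big)\bar c_i\ge\lambda''_i\bar c_i>\lambda_i$ for every $\bgamma$ --- which is exactly why the guarantee is independent of the channel varying speed $\psi$. Lemma~\ref{lem2} now yields rate-stability.

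The remaining gap is that the statement asks for a member of the EXP-A-CSMA subclass, whose ratio $f_i/g_i$ must be exponential in $c_i$ and hence cannot literally be channel-oblivious. I would close it either by appealing to the containment of the U-CSMA achievable region in that of A-CSMA under the same backoff cap (Corollary~\ref{cor:UACSMA}), or, concretely, by replacing the U-CSMA above with an EXP-A-CSMA whose small parameters $[r_i]$ are tuned so that the channel-averaged effective ratios $1/\mathbb E_\pi[e^{-r_ic_i}]$ reproduce the target product form, using the nonnegative correlation between $\sigma_i$ and $c_i$ that $r_i\ge0$ induces. This last step is the main obstacle: the A-CSMA chain is non-reversible (Theorem~\ref{thm:nonrev}), so $\pi_{\bsigma,\bc}$ has no closed form, and the throughput lower bound must be obtained without one --- via a limiting/averaging argument as $\psi\to\infty$ (the worst regime for limited backoff, where the Theorem~\ref{thm3} construction breaks because its $f_i$ scale with $\psi$) or via a monotone-coupling inequality --- all the while keeping $\max_if_i\le\phi$.
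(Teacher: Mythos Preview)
Your reduction to U-CSMA is essentially the paper's proof: Lemma~\ref{thm2} handles the two cases of $\alpha$ just as you do (your packaging via $[\lambda_i/\bar c_i]\in\mathrm{conv}(\cI(G))$ is a slight variant of the paper's, which divides by $\min_j\bar c_j$ in the first case, but both are correct), and the independence of $\{\bsigma(t)\}$ and $\{\bc(t)\}$ under constant $f_i,g_i$ together with Lemma~\ref{lem2} gives rate-stability with $\max_i R_i\le\phi$ exactly as you write.

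The gap you flag is real, and the paper closes it in a way you almost reach but do not pin down. First, do not invoke Corollary~\ref{cor:UACSMA}: in the paper's logic that corollary is a by-product of this very proof, so that route is circular. Your second idea --- set $f_i(x)=R_i$ and $g_i(x)=R_i e^{-r_ix}$ with $r_i$ chosen so that the channel-averaged holding rate $\sum_\bc\pi_\bc g_i(c_i)$ equals the target $S_i$ --- is exactly the parametrization the paper uses. What you are missing is the analytical step that makes it work: it is a \emph{dual} of Lemma~\ref{lem:thrwithin}. There, taking $f_i,g_i$ large compared to the channel speed $\psi$ forces the type-II (schedule) edges to dominate every arborescence and yields $\pi_{\bsigma,\bc}\approx\pi_\bc\,\pi_{\bsigma|\bc}$. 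Here one takes $R_i$ (hence all $f_i,g_i$, since $|r_i|$ is bounded) \emph{small} compared to $\psi$; the type-I (channel) edges then dominate, and the same Markov chain tree counting gives $\pi_{\bsigma,\bc}\approx\pi_\bc\,\pi^*_\bsigma$, where $\pi^*_\bsigma$ is the U-CSMA stationary law with parameters $R_i,S_i$. This is why the paper can simply say ``as we did for the proof of Lemma~\ref{lem:thrwithin}'' and conclude $|1-\pi_{\bsigma,\bc}/(\pi_\bc\pi^*_\bsigma)|<\varepsilon$. Note also that your proposed correlation inequality (``$r_i\ge0$ induces nonnegative correlation between $\sigma_i$ and $c_i$'') is not safe: the $r_i$ solving $S_i=\sum_\bc\pi_\bc R_ie^{-r_ic_i}$ can be negative when $S_i>R_i$, so that line of attack does not go through in general.
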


Theorem \ref{thm:achi-rate-regi} implies that even with arbitrary
small backoff rates, A-CSMA is guaranteed to achieve at least a
partial fraction of capacity region. For example, for bipartite
interference graph, at least 50\%-throughput is achieved. 
The proof strategy is as follow: {\em step 1)} we find the achievable rate
region of U-CSMA and {\em
  step 2)} we show that, for any U-CSMA parameters, there exists a
EXP-A-CSMA algorithm satisfying the backoff constraint and achieving 
$\varepsilon$ close departure rate with the U-CSMA algorithm (we
formally state this in Corollary~\ref{cor:UACSMA}).
\begin{corollary}\label{cor:UACSMA}
For any 
$\phi>0$, interference graph $G$, channel transition-rate
$\bgamma,$ and U-CSMA parameters, there
exists a EXP-A-CSMA algorithm with functions $[f_i]$ and
$[g_i]$ such that $\max_{i\in V, x\in[0,1]} f_i(x)\leq \phi$ and
$$\lim\sup_{t\rightarrow \infty}\left|1-\frac{\widehat{D}^{A}_i
    (t)}{\widehat{D}^{U}_i (t)}\right|~ <~ \varepsilon,~\mbox{for
  all}~i \in V,$$
where $\widehat{D}^{A}_i$ and $\widehat{D}^{U}_i$ denote the
cumulative potential departure processes of the EXP-A-CSMA and the
U-CSMA, respectively.
\end{corollary}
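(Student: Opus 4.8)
\medskip
\noindent\emph{Proof plan.}\ The plan is to realize the given U-CSMA---which has constant backoff and holding rates $R_i,S_i>0$, hence schedule intensities $\rho_i^{U}:=R_i/S_i\in(0,\infty)$---as a limit of EXP-A-CSMA algorithms whose backoff rates are a single small constant $\delta\in(0,\phi]$, so chosen that when $\delta$ is small relative to the (fixed) channel varying speed $\psi$ the channel fluctuates very fast compared with the scheduling dynamics and only its time-average matters. Writing $\pi^{(i)}_{\chs}:=\sum_{\bc\in\sch^n:\,c_i=\chs}\pi_{\bc}$ for the marginal channel law at link $i$, let $r_i\in\mathbb{R}$ be the unique solution of
$$\sum_{\chs\in\sch}\pi^{(i)}_{\chs}\,e^{-r_i\chs}~=~\frac{1}{\rho_i^{U}};$$
it exists and is finite because the left side is continuous, strictly decreasing in $r_i$, tends to $0$ as $r_i\to+\infty$ (since $\chs\ge\chs_1>0$) and to $+\infty$ as $r_i\to-\infty$, while $1/\rho_i^{U}\in(0,\infty)$. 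I then set $f_i(x)\equiv\delta$ and $g_i(x)=\delta\,e^{-r_i x}$ on $[0,1]$; since $\log\bigl(f_i(x)/g_i(x)\bigr)=r_i x$ this is an EXP-A-CSMA algorithm, and it meets the backoff constraint $\max_{i\in V,\,x\in[0,1]}f_i(x)=\delta\le\phi$.

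By ergodicity of $\{(\bsigma(t),\bc(t))\}$ under either algorithm, $\widehat{D}^{U}_i(t)/t\to\mu^{U}_i$ and $\widehat{D}^{A}_i(t)/t\to\mu^{A}_i(\delta)$ almost surely, where $\mu^{U}_i:=\mathbb{E}_{\pi^{U}\otimes\pi}[\sigma_i c_i]$ and $\mu^{A}_i(\delta):=\mathbb{E}_{\pi^{A}_\delta}[\sigma_i c_i]$, with $\pi$ the channel stationary law, $\pi^{A}_\delta$ the stationary law of the EXP-A-CSMA chain, and $\pi^{U}$ the product-form schedule law $\pi^{U}_{\bsigma}\propto\prod_j(\rho_j^{U})^{\sigma_j}$ (the schedule and channel processes are independent under U-CSMA, its rates being channel-independent). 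In particular $\mu^{U}_i=\bigl(\sum_{\bc\in\sch^n}c_i\pi_{\bc}\bigr)\,\mathbb{P}_{\pi^{U}}[\sigma_i=1]\ge\chs_1\,\mathbb{P}_{\pi^{U}}[\sigma_i=1]>0$, so $\widehat{D}^{A}_i(t)/\widehat{D}^{U}_i(t)$ is well defined for large $t$ and $\limsup_{t\to\infty}\bigl|1-\widehat{D}^{A}_i(t)/\widehat{D}^{U}_i(t)\bigr|=\bigl|1-\mu^{A}_i(\delta)/\mu^{U}_i\bigr|$. Hence it suffices to show $\mu^{A}_i(\delta)\to\mu^{U}_i$ for each $i$ as $\delta\to0$, and then fix $\delta\in(0,\phi]$ small enough that $\bigl|1-\mu^{A}_i(\delta)/\mu^{U}_i\bigr|<\varepsilon$ for every $i\in V$.

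The claim $\mu^{A}_i(\delta)\to\mu^{U}_i$ is a time-scale-separation statement: as $\delta\to0$ the CSMA transitions have rates $\Theta(\delta)$ while the channel transitions keep rates $\Theta(\psi)$ independent of $\delta$, so by the averaging principle for singularly perturbed finite Markov chains $\pi^{A}_\delta$ converges in total variation to $\bar\pi\otimes\pi$, where $\bar\pi$ is the stationary law of the channel-averaged scheduling chain obtained by replacing each channel-dependent rate by its $\pi$-average. The backoff rate is already constant and averages to $\delta$; the holding rate of link $i$ averages to $\sum_{\chs}\pi^{(i)}_{\chs}\,\delta e^{-r_i\chs}=\delta/\rho_i^{U}$; hence the averaged chain is a CSMA with intensities $\delta/(\delta/\rho_i^{U})=\rho_i^{U}$, so that $\bar\pi=\pi^{U}$, and therefore $\mu^{A}_i(\delta)\to\mathbb{E}_{\pi^{U}\otimes\pi}[\sigma_i c_i]=\mu^{U}_i$. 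Since $r_i$---and hence the threshold on $\delta$---depends only on $G$, $\bgamma$ and the U-CSMA parameters, not on $\delta$, there is no circularity; picking $\delta\in(0,\phi]$ small enough completes the construction. Combined with a characterization of the potential departure rates attainable by U-CSMA (step~1 in the proof of Theorem~\ref{thm:achi-rate-regi}) and Lemma~\ref{lem2}, this yields Theorem~\ref{thm:achi-rate-regi}.

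The step I expect to be the main obstacle is making the averaging rigorous: one must control the stationary distribution of a two-time-scale continuous-time Markov chain precisely enough to convert convergence of the schedule process into convergence of the long-run potential departure rates $\mu^{A}_i(\delta)$. Besides quoting singular-perturbation theory for finite chains, a self-contained alternative is to redo the Markov-chain-tree-theorem estimate of Lemma~\ref{lem:thrwithin} with the roles of fast and slow edges interchanged: in the weighted digraph on $\cI(G)\times\sch^n$ the channel edges now carry weights $\Theta(\psi)$ and the CSMA edges weights $\Theta(\delta)$; every spanning arborescence must use at least $|\cI(G)|-1$ CSMA edges (only these join distinct $\bsigma$-slices), and as $\delta\to0$ the total arborescence weight $w(\mathcal{A}_{\bsigma,\bc})$ concentrates on those using exactly $|\cI(G)|-1$ of them, so that summing over the channel sub-trees within each slice and over the channel state attached to each CSMA edge reproduces precisely the averaged rates above and gives $\pi^{A}_{\bsigma,\bc}=\pi^{U}_{\bsigma}\,\pi_{\bc}\,(1+O(\delta))$.
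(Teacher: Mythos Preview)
Your proposal is correct and follows essentially the same route as the paper. The construction---set $f_i\equiv$ (small constant), $g_i(x)=f_i\,e^{-r_ix}$, and choose $r_i$ so that the $\pi$-averaged holding rate matches the given U-CSMA ratio $S_i/R_i$---is exactly what the paper does in the proof of Theorem~\ref{thm:achi-rate-regi} (its equation $S_i=\sum_{\bc}\pi_{\bc}R_i e^{-r_ic_i}$ is your $\sum_{\chs}\pi^{(i)}_{\chs}e^{-r_i\chs}=1/\rho^U_i$ after marginalizing). The paper then invokes the Markov chain tree theorem with the roles of fast and slow edges reversed relative to Lemma~\ref{lem:thrwithin}, precisely the alternative you spell out in your last paragraph, to obtain $\pi^A_{\bsigma,\bc}\approx\pi^U_{\bsigma}\pi_{\bc}$; your singular-perturbation framing is a legitimate alternative packaging of the same limit.
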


\subsection{Proof of Theorem~\ref{thm:achi-rate-regi}}\label{sec:pfthm2}

The main strategy for the proof of Theorem \ref{thm:achi-rate-regi}
is that we study U-CSMA (channel-unaware CSMA) to achieve the performance guarantee of A-CSMA.
We start by stating the following key lemmas about U-CSMA.
\begin{lemma}\label{lem1}
	Let $P_I(G)$ be the independent-set polytope, \ie,
	\begin{equation}
	P_I(G)=\left\{\bm{x}\in[0,1]^n: \bm{x} = \sum_{\brho \in \cI(G)}\alpha_\brho \brho, \sum_{\brho \in \cI(G)} \alpha_{\brho}=1,\alpha_{\boldsymbol 0}>0   \right\}.
	\end{equation}
Then, for $\blambda\in P_I(G)$, there exists a U-CSMA algorithm with parameters $\boR=[R_i]$ and $\boS=[S_i]$ such that
	$$ \lim_{t\to\infty}\mathbb E[\bsigma(t)]>\blambda.$$
\end{lemma}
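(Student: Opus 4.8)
The plan is to reduce the statement to a routine Gibbs‑measure computation. Under a U‑CSMA algorithm the $\bsigma$‑transition rates in \eqref{eq:ucsma_kernel} do not involve the channel state, so $\{\bsigma(t)\}$ is by itself an irreducible finite‑state CTMC on $\cI(G)$, whose stationary distribution is the $\bsigma$‑marginal of the product form already recorded in the excerpt, namely $\pi_{\bsigma}\propto\prod_i(R_i/S_i)^{\sigma_i}$. By the ergodic theorem $\lim_{t\to\infty}\mathbb E[\bsigma(t)]$ exists and equals the vector of marginals of $\pi$. Hence, writing $s_i(\br):=\big(\sum_{\brho\in\cI(G)}\rho_i\,e^{\brho\cdot\br}\big)\big/\big(\sum_{\brho\in\cI(G)}e^{\brho\cdot\br}\big)$ and choosing $S_i=1$, $R_i=e^{r_i}$, it suffices to produce $\br\in\mathbb R^n$ with $s_i(\br)>\lambda_i$ for every $i$.

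The first step is to pass from $\blambda$ to an interior point of the stable‑set polytope $P:=\mathrm{conv}(\cI(G))$, which is full‑dimensional since $\bm{0},\bm{e}_1,\dots,\bm{e}_n\in\cI(G)$ are affinely independent. Here I would invoke the classical structural fact that, because $\cI(G)$ is down‑closed and contains $\bm{0}$ and every $\bm{e}_i$, each facet of $P$ is either $\{x\in P:x_i=0\}$ for some $i$, or $\{x\in P:\bm{a}\cdot x=b\}$ with $\bm{a}\ge\bm{0}$, $\bm{a}\ne\bm{0}$ and $b=\max_{\brho\in\cI(G)}\bm{a}\cdot\brho>0$. Since $\blambda\in P_I(G)$ it admits a representation $\blambda=\sum_{\brho}\alpha_{\brho}\brho$ with $\alpha_{\bm{0}}>0$, and then every non‑trivial facet inequality is strict at $\blambda$: $\bm{a}\cdot\blambda=\sum_{\brho\ne\bm{0}}\alpha_{\brho}(\bm{a}\cdot\brho)\le(1-\alpha_{\bm{0}})\,b<b$. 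Consequently $\blambda':=\blambda+\varepsilon\bm{1}$ satisfies every facet inequality of $P$ strictly for all sufficiently small $\varepsilon>0$ (there are finitely many, $\bm{a}\cdot\bm{1}>0$ for the non‑trivial ones, and $\lambda'_i=\lambda_i+\varepsilon>0$ for the trivial ones), so $\blambda'\in\mathrm{int}(P)$ while $\blambda'>\blambda$ componentwise.

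The second step is the standard variational argument — the channel‑free specialization of the function $F$ in Lemma~\ref{lem:thropt}. Put $F(\br)=\blambda'\cdot\br-\log\sum_{\brho\in\cI(G)}e^{\brho\cdot\br}$. Then $F$ is strictly concave (its Hessian is $-\mathrm{Cov}_{\br}(\bsigma)$, negative definite because $\bm{0},\bm{e}_1,\dots,\bm{e}_n\in\cI(G)$ rule out any affine relation among the $\sigma_i$), and it is coercive: along a unit vector $\bm{u}$, $t^{-1}F(t\bm{u})\to\blambda'\cdot\bm{u}-\max_{\brho\in\cI(G)}\brho\cdot\bm{u}<0$ because $\blambda'\in\mathrm{int}(P)$ — and, being concave and tending to $-\infty$ along every ray from $\bm{0}$, $F$ has bounded (hence compact) super‑level sets. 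Therefore $F$ attains its maximum at a finite $\br^{*}$ with $\bm{0}=\nabla F(\br^{*})$, which reads $s_i(\br^{*})=\lambda'_i$ for all $i$. Taking $R_i=e^{r_i^{*}}$, $S_i=1$ then yields a U‑CSMA algorithm with $\lim_{t\to\infty}\mathbb E[\bsigma(t)]=[\,s_i(\br^{*})\,]=\blambda'>\blambda$, proving the lemma.

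The only genuinely delicate point is the first step: extracting from the hypothesis $\alpha_{\bm{0}}>0$ that $\blambda$ lies off every non‑trivial face of $P$, i.e.\ the facet‑structure fact for the stable‑set polytope (non‑negative coefficients, strictly positive right‑hand side). This is well known but deserves a precise statement; the remaining ingredients — the coercivity/compactness bookkeeping and the log‑partition gradient identity $\nabla_i F=\lambda'_i-s_i(\br)$ — are routine.
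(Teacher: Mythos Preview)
Your proposal is correct and follows essentially the same route the paper invokes: the paper's own proof simply defers to Lemma~8 of \cite{JSSW10DRA}, which is precisely the log--partition/variational argument you carry out (the static--channel specialization of the function $F$ in Lemma~\ref{lem:thropt}). Your Step~1, nudging $\blambda$ into $\mathrm{int}(P)$ via the anti--blocking facet structure of the stable--set polytope, is a slightly more geometric variant of how \cite{JSSW10DRA} and the proof of Lemma~\ref{lem:thropt} handle the boundary case (there one picks $\blambda'\ge\blambda$ with each $\lambda'_i$ bounded below and then bounds $\br^{*}$ explicitly by comparing $F(\br)$ with $F(\bm 0)$), but the substance is identical.
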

\begin{proof}
	The proof of Lemma 8 in \cite{JSSW10DRA} goes through for the proof of Lemma \ref{lem1} in an identical manner. We omit further details.
\end{proof}

\begin{lemma}\label{thm2}
  For any $\phi>0$, interference graph $G$, channel transition-rate $\bgamma$ and arrival rate $\blambda \in \alpha\bLambda^o$, 
there exists a rate-stable U-CSMA algorithm with
parameters $\boR=[R_i]$ and $\boS=[S_i]$ such that $$\max_i R_i\leq \phi\quad\mbox{and}\quad\max_i S_i\leq \phi,$$
where $\alpha$ is defined in \eqref{eq:ucsma}.
\end{lemma}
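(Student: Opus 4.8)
The plan is to exploit the fact that under U-CSMA the schedule process and the channel process decouple, which reduces the statement to Lemma \ref{lem1} together with two elementary facts about $\mathrm{conv}(\cI(G))$. Under U-CSMA the rates $f_i\equiv R_i$ and $g_i\equiv S_i$ do not depend on $\bc$, so the generator in \eqref{eq:ucsma_kernel} splits into a channel part and a schedule part acting on disjoint coordinates; hence $\{\bc(t)\}$ and $\{\bsigma(t)\}$ are independent Markov chains and $\pi_{\bsigma,\bc}=\pi_{\bc}\cdot\pi_{\bsigma}$ with $\pi_{\bsigma}\propto\exp(\sum_i\sigma_i\log(R_i/S_i))$, as already noted in the text. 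By ergodicity of $\{(\bsigma(t),\bc(t))\}$ the potential departure rate of link $i$ is therefore
$$\lim_{t\to\infty}\tfrac1t\widehat D_i(t)=E_{\pi_{\bsigma,\bc}}\left[\sigma_i c_i\right]=\Big(\lim_{t\to\infty}E[\sigma_i(t)]\Big)\cdot\bar c_i,\qquad \bar c_i:=\sum_{\bc\in\sch^n}c_i\pi_{\bc}.$$
In view of Lemma \ref{lem2} it thus suffices to produce $\bm{x}\in\mathrm{conv}(\cI(G))$ with $\bar c_i x_i>\lambda_i$ for all $i$: shrinking $\bm{x}\mapsto\theta\bm{x}$ for $\theta<1$ sufficiently close to $1$ keeps $\bar c_i\theta x_i>\lambda_i$ while giving $\theta\bm{x}$ strictly positive weight on $\bm{0}$, so $\theta\bm{x}\in P_I(G)$ and Lemma \ref{lem1} yields a U-CSMA with $\lim_t E[\sigma_i(t)]>\theta x_i$, hence $\lim_t\tfrac1t\widehat D_i(t)>\lambda_i$. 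Finally the backoff/holding cap is automatic: replacing $(R_i,S_i)$ by $(\theta_0R_i,\theta_0S_i)$ for small $\theta_0>0$ leaves every ratio $R_i/S_i$ — hence $\pi_{\bsigma}$ and every potential departure rate — unchanged, while forcing $\max_iR_i\le\phi$ and $\max_iS_i\le\phi$. Rate-stability then follows from Lemma \ref{lem2}.

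It remains to construct $\bm{x}$, and this is where the two branches of the maximum in \eqref{eq:ucsma} enter. Write the hypothesis $\blambda\in\alpha\bLambda^o$ as $\blambda/\alpha<\blambda'$ for some $\blambda'\in\cH$, and expand $\lambda_i'=\sum_{\bc}\pi_{\bc}c_i\,y_i(\bc)$ where $[y_i(\bc)]_i\in\mathrm{conv}(\cI(G))$ for each $\bc$; in particular $0\le y_i(\bc)\le 1$. If $\alpha=\min_j\bar c_j$, set $x_i:=\sum_{\bc}\pi_{\bc}y_i(\bc)$, a $\pi_{\bc}$-average of points of $\mathrm{conv}(\cI(G))$, hence itself in $\mathrm{conv}(\cI(G))$; using $\bar c_i\ge\alpha$ and $c_i\le 1$,
$$\bar c_i x_i\ \ge\ \alpha\sum_{\bc}\pi_{\bc}y_i(\bc)\ \ge\ \alpha\sum_{\bc}\pi_{\bc}c_i\,y_i(\bc)\ =\ \alpha\lambda_i'\ >\ \lambda_i .$$
If instead $\alpha=1/\chi(G)$, fix a proper $\chi(G)$-coloring $V=V_1\cup\cdots\cup V_{\chi(G)}$ and set $\bm{x}:=\tfrac1{\chi(G)}\bm{1}=\tfrac1{\chi(G)}\sum_k\bm{1}_{V_k}$, a convex combination of indicator vectors of independent sets; since $y_i(\bc)\le 1$ we have $\lambda_i'\le\bar c_i$, so $\bar c_i x_i=\bar c_i/\chi(G)\ge\lambda_i'/\chi(G)>\lambda_i$. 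As $\alpha$ equals the larger of $\min_j\bar c_j$ and $1/\chi(G)$, one of these two constructions applies to the given $\blambda$, and the proof is complete.

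The bookkeeping — independence of the two chains, the harmless shrink into $P_I(G)$, and the scale-invariance used to meet the rate cap — is routine; the only genuinely new ingredient is the second construction. The point is that $\tfrac1{\chi(G)}\bm{1}\in\mathrm{conv}(\cI(G))$ by fractional coloring, and the crude bound $\lambda_i'\le\bar c_i$, valid only because channel capacities never exceed $1$, then converts this into a $1/\chi(G)$ fraction of capacity that is completely insensitive to the channel statistics $\bgamma$ and the size of the backoff rates. The first construction is the time-varying analogue of Lemma~8 of \cite{JSSW10DRA}, obtained by averaging over channel states; the care needed there is merely to keep every inequality strict so that Lemmas \ref{lem1} and \ref{lem2} apply verbatim.
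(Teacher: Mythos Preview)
Your proof is correct and follows essentially the same route as the paper's: decouple the schedule and channel processes under U-CSMA, reduce to finding a point $\bm{x}\in P_I(G)$ with $\bar c_i x_i>\lambda_i$, handle the two branches of \eqref{eq:ucsma} separately, and finally scale $(R_i,S_i)$ down to meet the cap $\phi$. Your treatment is in fact slightly more explicit than the paper's in two places---you spell out the fractional-coloring identity $\tfrac{1}{\chi(G)}\bm{1}=\tfrac{1}{\chi(G)}\sum_k\bm{1}_{V_k}$, and you give a concrete averaging construction $x_i=\sum_{\bc}\pi_{\bc}y_i(\bc)$ in place of the paper's bare assertion $\bLambda^o\subset P_I(G)$---but the logic and the key ingredients coincide.
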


\begin{proof}
	It suffices to show that there exists a U-CSMA algorithm
        stabilizing any arrival rate $\blambda$ such that $$\blambda \in \frac1{\chi(G)}\cdot\bLambda^o
	\quad\mbox{or}\quad\blambda \in \min\limits_{i \in V} \sum_{\bc \in \sch^n}c_i \pi_{\bc}\cdot \bLambda^o.$$

First, consider
$\blambda \in \frac1{\chi(G)}\cdot\bLambda^o$. 
From Lemma \ref{lem2} and the ergodicity of Markov process $\{(\bsigma(t)\}$ and $\{\bc(t)\}$ under U-CSMA,
it suffices to prove that
there exists a U-CSMA algorithm satisfying
	$$ \lim_{t\to\infty}\mathbb E[\sigma_i(t)c_i(t)]> \lambda_i\qquad\mbox{for all}~i\in V.$$
	Since ${\chi(G)}\cdot \lambda_i<\lim_{t\to\infty}\mathbb E[c_i(t)]= \sum_{\bc \in \sch^n}c_i \pi_{\bc}$ (otherwise, ${\chi(G)}\blambda \notin \bLambda^o$),
	it is enough to prove that for an appropriately defined $\delta>0$,
	$$ \lim_{t\to\infty}\mathbb E[\sigma_i(t)]>\frac1{\chi(G)}-\delta\qquad\mbox{for all}~i\in V.$$
	There exists a U-CSMA algorithm with parameter $\boR=[R_i]$ and $\boS=[S_i]$ satisfying
	the above inequality from Lemma \ref{lem1} and
	$\left[\frac1{\chi(G)}-\delta\right]\in P_I(G)$. Furthermore, we can make $R_i$ and $S_i$ arbitrarily small since
	$ \lim_{t\to\infty}\mathbb E[\sigma_i(t)]$ under U-CSMA is invariant as long as ratios $R_i/S_i$ remain same.

	Now the second case $\blambda \in \min\limits_{i \in V} \sum_{\bc \in \sch^n}c_i \pi_{\bc}\cdot \bLambda^o$ can be proved
	in an similar manner, where we have to prove that 
	there exists a U-CSMA algorithm satisfying
		$$ \lim_{t\to\infty}\mathbb E[\sigma_i(t)]>\rho_i\qquad\mbox{for all}~i\in V,$$
		where we define $\brho=[\rho_i]$ as
		$$\brho = \frac1{\min\limits_{i \in V} \sum_{\bc \in \sch^n}c_i \pi_{\bc}}\cdot \blambda \in \bLambda^o\subset P_I(G).$$
		This follows from Lemma \ref{lem1} and $\brho\in P_I(G)$.
		This completes the proof of Lemma
                \ref{thm2}.\end{proof}

\begin{figure*}[t!]
  \begin{center}
    \subfigure[\small 5-link complete graph]{
      \includegraphics*[width = 0.32\columnwidth]{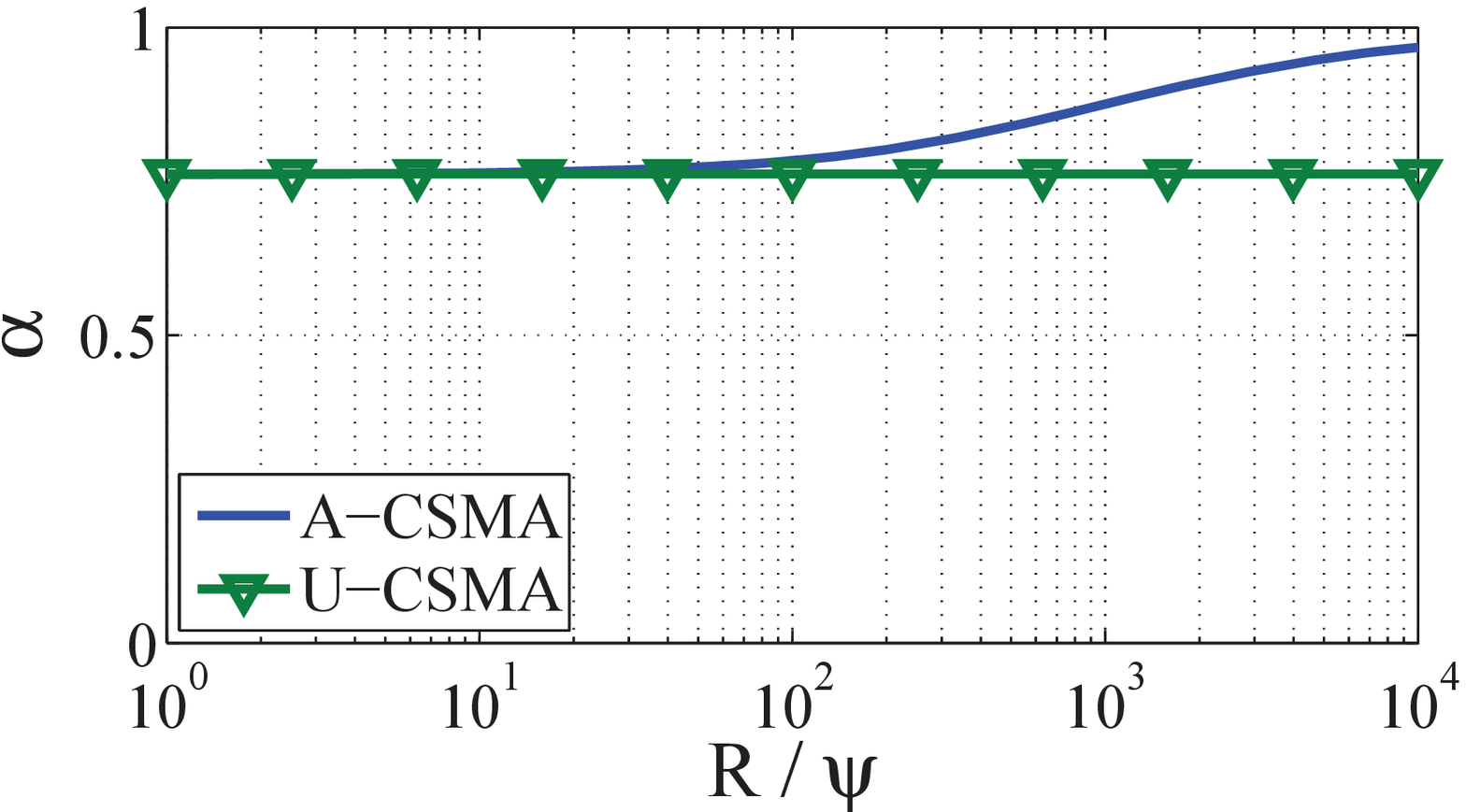}\label{fig:comp1} }
    \hspace{0.5cm}
    \subfigure[\small Random topology]{
      \includegraphics*[width = 0.2\columnwidth]{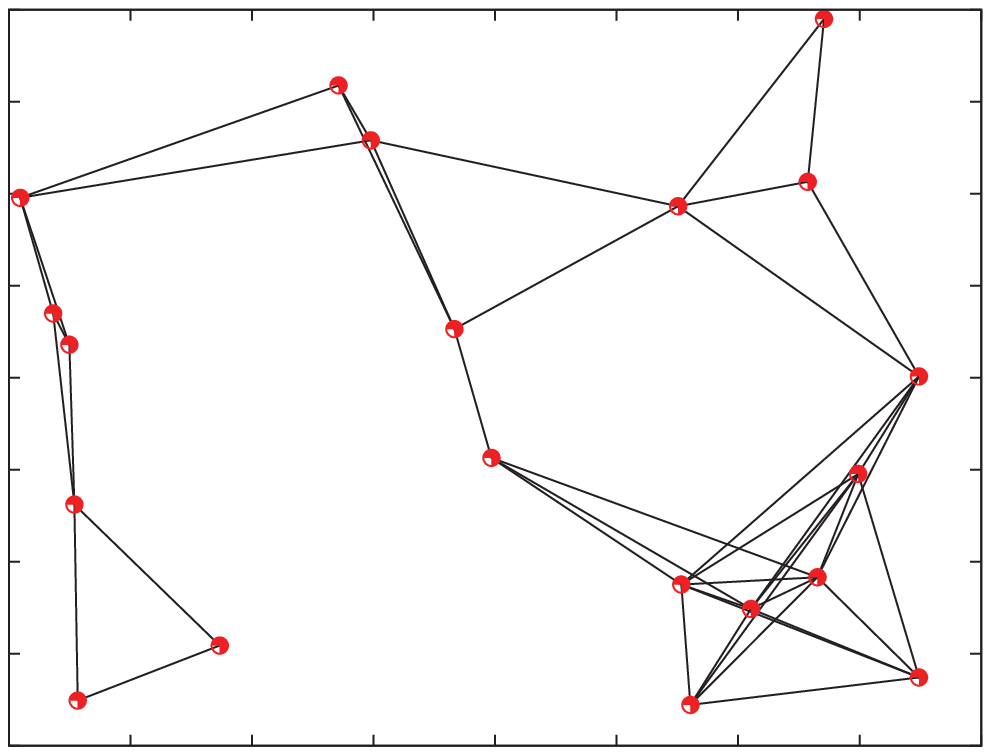}\label{fig:rand_topo}}
    \hspace{0.5cm}
    \subfigure[\small Random: A-CSMA vs U-CSMA and Uniqueness]{
      \includegraphics*[width = 0.32\columnwidth]{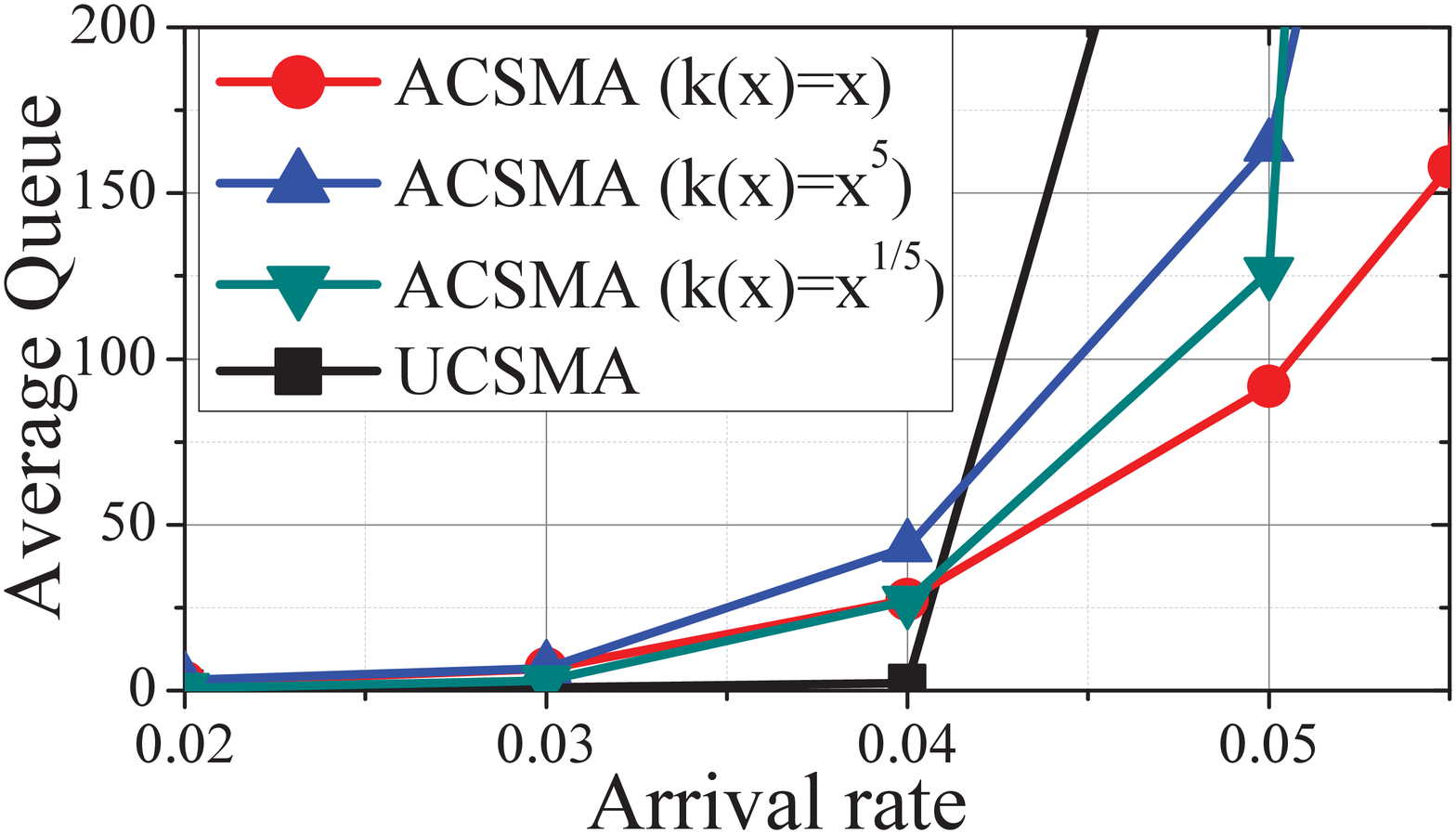}\label{fig:comp2} }
  \vspace{-0.2cm}
    \caption{Numerical Results}
    \label{fig:numerical}
  \end{center}
  \vspace{-0.2cm}
\end{figure*}

Lemma \ref{thm2} implies that
for any arrival rate
$\blambda=[\lambda_i]\in \alpha \bLambda^o,$ there exist $\varepsilon>0$ and
a rate-stable U-CSMA algorithm with
arbitrary small parameters $[R_i]$ and $[S_i]$, which stabilize arrival rate $(1+\varepsilon)\blambda$, \ie,
$$(1+\varepsilon)\lambda_i
\leq\sum_{\bc \in \sch^n}c_i\pi_{\bc}\sum_{\bsigma \in \cI(G): \sigma_i = 1}\pi^*_{\bsigma},$$
where $[\pi^*_{\bsigma}]$ is the stationary distribution of Markov
process $\{ \bsigma(t)\}$ induced by the U-CSMA algorithm. In particular, given $\phi>0$,
one can assume 
$\max_i R_i\leq \phi$.
For the choice of $[R_i]$ and $[S_i]$, 
we consider an EXP-A-CSMA algorithm with functions $$f_i(x) = R_i\quad\mbox{and}\quad g_i(x) =  R_i \exp
(-r_i x),$$ where we choose $r_i$ to satisfy
$$S_i = \sum\limits_{\bc \in \sch^n}
\pi_{\bc}R_i \exp(-r_i \cdot c_i).$$
Note that $r_i$ satisfying the above equality always exists for given $S_i$,
and
$$\max_{i\in V, x\in[0,1]} f_i(x)=\max_i R_i\leq \phi.$$
Furthermore, one can observe that the maximum value of $f_i(x)$ and $g_i(x)$ for $x\in [0,1]$
can be made arbitrarily small due to arbitrarily small $R_i,S_i$.
Using this observation and the Markov chain tree theorem (as we did 
for the proof of Lemma~\ref{lem:thrwithin}), one can show that
$$\max_{(\bsigma,\bc)\in \mathcal I(G)\times\sch^n}\left|1-\frac{\pi_{\bsigma,\bc}}{\pi_{\bc}\pi^*_{\bsigma}}\right|~ <~ \varepsilon,$$
where $[\pi_{\bsigma,\bc}]$ denotes the stationary distribution of
Markov process $\{(\bsigma(t),\bc(t)) \}$ by the EXP-A-CSMA algorithm.  Therefore, it follows that
\begin{eqnarray*}
	\lambda_{i} &\leq&\left(1-\frac{\varepsilon}{1+\varepsilon}\right)\sum_{\bc \in \sch^n}c_i\pi_{\bc}\sum_{\bsigma \in \cI(G): \sigma_i = 1}\pi^*_{\bsigma}\\
	&<&~ \sum_{\bc\in\sch^n}\sum_{\bsigma \in \cI(G): \sigma_i = 1}c_i\pi_{\bsigma,\bc}\\
	&=&~\lim_{t\to\infty} \frac1t \widehat D_{i}(t),
\end{eqnarray*}
where the last inequality is from the ergodicity of Markov process
$\{(\bsigma(t),\bc(t))\}$. Due to Lemma \ref{lem2}, this means that 
the EXP-A-CSMA algorithm is rate-stable for the arrival rate $\blambda$.
This completes the proof of Theorem \ref{thm:achi-rate-regi}.

\section{Numerical Results}
\label{sec:num}

In this section, we provide simple numerical results that demonstrate
our analytical findings. 

\smallskip
\noindent{\bf \em Complete interference graph.}
We first consider a 5-link complete interference graph, \ie,
all  5 links interfere with each other.  All
queues are homogeneous in terms of time-varying channels, where we
assume that the channel space is simply $\{0.5,
1\}$ and the transition-rate $\gamma=\ratech^{0.5 \to 1} =
\ratech^{1 \to 0.5}.$ We compare A-CSMA and U-CSMA, with the following
functions: 
\begin{eqnarray*}
  \text{A-CSMA:} && f_i(x) = R, \quad g_i(x) = R\cdot10^{-4x}\cr
  \text{U-CSMA:} &&f_i(x) = R, \quad g_i(x) = R\cdot10^{-4},
\end{eqnarray*}
so that $\log(f_i/g_i) = 4 x$ for A-CSMA and $4$ for U-CSMA, respectively. 
Throughputs of A-CSMA and U-CSMA are evaluated
by estimating the average rate in the potential departure process, \ie, $\lim_{t\to\infty} \frac1t\widehat{\bD}(t)$.
Figure~\ref{fig:comp1} shows the results, where in $x$-axis, we vary the
ratio of backoff rate $R$ to channel varying speed $\psi$ (determined by
$\gamma$) and $y$-axis represents the fraction of achievable rate region
$\alpha$ (note that in a complete interference graph, the rate region is
symmetric). We see that (i) by reflecting the channel capacity in the
CSMA parameters as an exponential function, A-CSMA has $\alpha$-throughput
where $\alpha$ approaches
$100\%$ (this can be explained by Theorem~\ref{thm3}), and (ii) U-CSMA has $76\%$-throughput.
Note that $\alpha \geq 76\%$
even with limited backoff rates (\ie, small $R/\psi$), and this
matches Corollary~\ref{cor:UACSMA} which states that
A-CSMA's throughput is at least U-CSMA's throughput.


\smallskip
\noindent{\bf \em Random topology.}
We now study {\em dynamic} A-CSMA and U-CSMA for a random topology
by uniformly locating 20 nodes in a square area and a link between two
nodes are established by a given transmission range, as
depicted in Figure~\ref{fig:rand_topo}. 
To model interference, we assume the two-hop interference model (\ie, any two links within two
hops interfere) as in 802.11. Here, each link has independent and
identical channels, where $\sch=\{\frac{u}{10}: 1\leq u \leq 10\}.$ For
all link $i,$ $\ratech^{u/10 \to (u+1)/10} = \ratech^{u/10 \to
  (u-1)/10}= 0.01,$ and 0 otherwise.

In Figure~\ref{fig:comp2}, we increase the arrival rates homogeneously
for all links, and plot the average queue lengths to see which arrival
rates makes the system stable or unstable across the tested
algorithms. The average queue length blows up when the algorithm
cannot stabilize the given arrival rate. We test {\em dynamic} A-CSMA
and U-CSMA algorithms: the queue-based A-CSMA($x$), A-CSMA($x^5$),
A-CSMA($x^{1/5}$), and U-CSMA, where for given function $k(x)$,
A-CSMA($k(x)$) denotes the A-CSMA algorithms satisfying
\eqref{eq:kcsma}. Note that if $k(x)=1$, A-CSMA($k(x)$) is equal to
U-CSMA. The functions $[f_i]$ and $[g_i]$ are defined as stated in
Section~\ref{sec:queue-based} except the channel adaptation function
$k(\cdot)$. Figure~\ref{fig:comp2} shows that (a) A-CSMA($x$)
stabilize more arrival rates than A-CSMA($x^5$) and A-CSMA($x^{1/5}$),
which coincides with our uniqueness result (see
Theorem~\ref{thm:nonlinear}) and (b) dynamic A-CSMA algorithms
outperforms dynamic U-CSMA when the arrival rate is larger than 0.04,
which means that the achievable rate region of A-CSMA includes the
achievable rate region of U-CSMA. In low arrival rate region, U-CSMA
could be better than A-CSMA in view of delay because the transmission
intensity of U-CSMA is always high when the queue is large, while,
under A-CSMA algorithm, each link waits until its channel condition
being good although the queue is large.

  


\section{Conclusion}
Recently, it is shown that CSMA algorithms can achieve throughput (or
utility) optimality where `static' channel is assumed. However, in
practice, the channel capacity of each link has variation. To our best
knowledge, this work is the first study on the throughput optimality
with time-varying channels. To this end, we propose A-CSMA wich
adaptively acts on the channel variation. First we show that the
achievable rate region of A-CSMA contains all of the capacity of the
network.  From the result, in this work, we design throughput optimal
A-CSMA algorithms which can stabilize any arrival rate in the
capacity. We also consider more practical scenario of limited backoff
rate. According to our results, with any backoff rate limitation, the
achievable rate region of A-CSMA contains the achievable rate region
of channel unaware CSMA (U-CSMA) which does not adapt to the channel
variations.

\bibliographystyle{abbrv}
\bibliography{reference}

\appendix
\noindent{\bf \em Proof of Theorem~\ref{thm:timevaryingcsma}.}
Despite the fact that the dynamics of channel and CSMA Markov chain are
coupled in a complex manner, we found that the proof of Theorem
\ref{thm:timevaryingcsma} largely shares with that of Theorem 1 in
\cite{JSSW10DRA} in conjunction with Lemma \ref{lem:thropt} and Lemma
\ref{lem:thrwithin}. Thus, we provide the proof sketch with focus on the
key step, as described by:
$\lim_{j\to \infty}\br(j)=\br^*,$ with probability 1,
where $\br(j)=\left[\log \frac{f_i^{(j)}(1)}{g_i^{(j)}(1)}\right]$ and
$\br^*=\br^*(\blambda)$ is the unique maximizer of the following
function $F$:
\begin{equation*}
  F(\bm{r}) = \bm{\lambda}\cdot\bm{r} - \sum_{\bc \in \sch^n
  }\pi_{\bc}\log \left( \sum_{\bsigma \in \cI(G)}
    \exp\left(\sum_{i}\sigma_i c_i r_i\right) \right),
\end{equation*}
where the existence and uniqueness of the maximum point is guaranteed
by Lemma~\ref{lem:thropt}.
The maximum point $\br^*$ can be derived by establishing 
that the updating rule \eqref{equpdate1} is the (stochastic) gradient algorithm maximizing $F$, \ie,
\begin{eqnarray}
  \frac{\partial F}{\partial r_i}(\bm{r}(j))&=&\lambda_i -
  \sum_{\bc \in \sch^n}c_i \pi_{\bc}\sum_{\bsigma \in \cI(G): \sigma_i = 1}\pi_{\bsigma|\bc}(j)
  \label{eq1:pfthmtimevaryingcsma}\\
  &\stackrel{j\to\infty}{=}&\lambda_i-\sum_{\bc \in \sch^n}\sum_{\bsigma \in \cI(G): \sigma_i = 1}c_i\pi_{\bsigma,\bc}(j)\label{eq2:pfthmtimevaryingcsma}\\
  &\stackrel{j\to\infty}{=}& \hlambda_{i}(j)-\hs_{i}(j),\label{eq3:pfthmtimevaryingcsma}\end{eqnarray}
where $[\pi_{\bsigma,\bc}(j)]$ and $[\pi_{\bsigma|\bc}(j)]$ are
the stationary distributions of Markov processes $\{(\bsigma(t),\bc(t))\}$ 
and $\{\bsigma(t),\bc\}$ induced by the A-CSMA algorithm with functions $[f_i^{(j)}]$ and $[g_i^{(j)}]$, respectively.
(\ref{eq1:pfthmtimevaryingcsma}) and (\ref{eq3:pfthmtimevaryingcsma})
can be shown using a similar strategy to those in \cite{JSSW10DRA}. The second equality (\ref{eq2:pfthmtimevaryingcsma}) follows from Lemma \ref{lem:thrwithin}
and for all channel state $\chs_u,$ $f_i^{(j)}(\chs_u),g_i^{(j)}(\chs_u)\to\infty$ as $j\to\infty$. We omit
further necessary details.

\smallskip
\noindent{\bf \em Proof of Theorem~\ref{thm:queuecsma}.}
Similar to the proof of Theorem~\ref{thm:timevaryingcsma}, for the
queue-based algorithm, we sketch the proof by focusing on the key
difference from the proof of the prior algorithm \cite{Shin12}. 
It suffices to show that the underlying Markov process is positive
recurrent, which implies the rate-stability in this paper. The key
difference part is to prove the following (which is a generalized
version of Lemma 7 in \cite{Shin12}):  
for given $\varepsilon>0$ and $[Q_i]\in \mathbb R_+^n$ with large enough $Q_{\max}=\max_i Q_i$ (depending on $\varepsilon$ and $n$),
we have to show that
\sqeq
\begin{equation}
 \mathbb E\left[\sum_i w(Q_i)x_iy_i \right] ~~\geq~~ 
  (1-\varepsilon)\cdot\mathbb E\left[\max_{\bz\in\cI(G)}\sum_i w(Q_i)z_i y_i \right].\label{eq1:pfthmqueuecsma}
\end{equation}\unsqeq
In above, $\bx=[x_i]\in\cI(G)$, $\by=[y_i]\in\sch^n$ and $(\bx,\by)$ is distributed as per $[\pi_{\bsigma,\bc}]$ that is
the stationary distribution of Markov
process $\{(\bsigma(t),\bc(t))\}$ induced by the A-CSMA algorithm with
parameters $[f_i]=[g_i^2]$ and
$$ [g_i (c_i)]=\left[\exp\big(c_i \cdot
  \max\{w(Q_i),\sqrt{w(Q_{\max})}\}\big)\right]. $$
To show \eqref{eq1:pfthmqueuecsma}, it suffices to prove that (with probability 1)
\sqeq
\begin{equation}
  \mathbb E\left[\sum_i w(Q_i)x_iy_i-(1-\varepsilon)\cdot \max_{\bz\in\cI(G)}\sum_i w(Q_i)z_i y_i~\Big|~\by \right]~\geq~0.\label{eq2:pfthmqueuecsma}
\end{equation}\unsqeq
To this end, Lemma \ref{lem:thrwithin} 
implies that the distribution of $\bx$ given $\by$ is close to $[\pi_{\bsigma | \by}]$ that is 
the stationary distribution of Markov
process $\{(\bsigma(t),\by)\}$ induced by the A-CSMA algorithm with
parameters $[f_i]=[g_i^2]$ and
$ [g_i (c_i)]=\left[\exp\big(c_i \cdot
  \max\{w(Q_i),\sqrt{w(Q_{\max})}\}\big)\right]. $
 This is because 
$f_i (\chs_u)$ and $g_i (\chs_u)$ are large enough for all
channel state $h_u$ due to large enough $Q_{\max}$. Namely, for $\bsigma=[\sigma_i]\in \cI(G)$, it follows that
\begin{align}
  &\Pr[\bx=\bsigma~|~\by]\approx\frac1 Z\exp\left(\sum_i \sigma_i\cdot \log \frac{f_i(y_i)}{g_i(y_i)}\right)
  ,\label{eq3:pfthmqueuecsma}
\end{align} 
where $Z=\sum_{\bsigma\in\mathcal I^{\by}(G)}\exp\left(\sum_i \sigma_i\cdot \log \frac{f_i(y_i)}{g_i(y_i)}\right)$ is the normalizing factor.
Now one can prove \eqref{eq2:pfthmqueuecsma} using \eqref{eq3:pfthmqueuecsma} following the same strategy with that in the proof of
Lemma 7 in \cite{Shin12}. We omit further necessary details.


\end{document}